\def\argmax{\mathop{\arg\max}\limits}	%
\DeclareMathOperator{\tr}{tr}
\newcommand{\indicator}{\mathds{1}}			
\newcommand{\longeq}[2]{\xlongequal[\!#2\!]{\!#1\!}}
\newcommand{\longineq}[3]{\overset{#1}{\underset{#2}{#3}}}
\newtheorem{theorem}{Theorem}
\newtheorem{problem}{Problem}[section]
\newtheorem*{problem*}{Problem}
\title{Distributed Algorithms for Stochastic Source Seeking with Mobile Robot Networks: Technical Report%
\thanks{This work has been submitted to the ASME for possible publication. Copyright may be transferred without notice, after which this version may no longer be accessible.}
}
\author{Nikolay A. Atanasov%
\thanks{Address all correspondence to this author.\newline This work was supported by ONR-HUNT grant N00014-08-1-0696 and by TerraSwarm, one of six centers of STARnet, a Semiconductor Research Corporation program sponsored by MARCO and DARPA.}
    \affiliation{
	Electrical and Systems Engineering Dept.\\
	University of Pennsylvania\\
	Philadelphia, PA 19104\\
    Email: atanasov@seas.upenn.edu
    }
}
\author{Jerome Le Ny
    \affiliation{
  Electrical Engineering Dept.\\
	\'Ecole Polytechnique de Montr\'eal\\
	Montr\'eal, QC H3T-1J4, Canada\\
        Email: jerome.le-ny@polymtl.ca 
    }
}
\author{George J. Pappas
    \affiliation{
  Electrical and Systems Engineering Dept.\\
  University of Pennsylvania, Philadelphia, PA 19104\\
    Email: pappasg@seas.upenn.edu
    }
}
\begin{document}

\maketitle    

\begin{abstract}
\textit{
Autonomous robot networks are an effective tool for monitoring large-scale environmental fields. This paper proposes distributed control strategies for localizing the source of a noisy signal, which could represent a physical quantity of interest such as magnetic force, heat, radio signal, or chemical concentration. We develop algorithms specific to two scenarios: one in which the sensors have a precise model of the signal formation process and one in which a signal model is not available. In the model-free scenario, a team of sensors is used to follow a stochastic gradient of the signal field. Our approach is distributed, robust to deformations in the group geometry, does not necessitate global localization, and is guaranteed to lead the sensors to a neighborhood of a local maximum of the field. In the model-based scenario, the sensors follow the stochastic gradient of the mutual information between their expected measurements and the location of the source in a distributed manner. The performance is demonstrated in simulation using a robot sensor network to localize the source of a wireless radio signal.
}
\end{abstract}


\section{Introduction}
\label{sec:intro}
The ability to detect the source of a signal is a fundamental problem in nature. At a microscopic level, some bacteria are able to find chemical, light, and magnetic sources \cite{Lux04_chemotaxis, Frankel97_magnetotaxis}. At a macroscopic level similar behavior can be observed in predators who seek a food source using their sense of smell. Reproducing this behavior in mobile robots can be used to perform complex missions such as environmental monitoring \cite{Ogren04_multirobot, Sukhatme07_monitoring}, intelligence, surveillance, and reconnaissance \cite{Ribsky00_isr}, and search and rescue operations \cite{Kumar04_firstResponse}.

In this paper, we discuss how to control a team of mobile robotic sensors with the goal of locating the source of a noisy signal, which represents a physical quantity of interest such as magnetic force, heat, radio signal, or chemical concentration. We distinguish between two cases: \textit{model-free} and \textit{model-based}. The first scenario supposes that the sensors receive measurements without knowledge about the signal formation process. This is relevant when the signal is difficult to model or the environment is unknown a priori. On-line modeling of the signal might not be feasible either because it requires time and computational resources, which are limited on small platforms and in time-critical missions. The second scenario supposes that the sensors have an accurate signal model which can be exploited to localize the source, potentially faster and with better accuracy.


Our model-free source-seeking approach consists in climbing the gradient of the signal field by using a stochastic approximation technique to deal with the underlying noise. Our strategy is robust to deformations in the geometry of the sensor network and can be applied to sensors with limited computational resources and no global localization capabilities. Recent work developing model-free source-seeking using a sensor formation to ascend the gradient of the signal field includes \cite{Ogren04_multirobot, Zhang11_sourceSeeking, Guo_ICRA12, Arranz_ECC13}. \"Ogren et al. \cite{Ogren04_multirobot} use artificial potentials to decouple the formation stabilization from the gradient ascent. Centralized least-squares are used to estimate the signal gradient.
A distributed approach for exploring a scalar field using a cooperative Kalman filter is presented in \cite{FZhang_TAC10}. The authors design control laws to achieve a formation, which minimizes the estimation error.
Similarly, in \cite{Arranz_ECC13} a circular formation is used to estimate the signal gradient in a distributed manner based on a Newton-Raphson consensus method. A drawback of these works is the assumption that the sensor formation is maintained perfectly throughout the execution of the algorithm, which is hardly possible in a real environment. In this paper, imperfect formations are explicitly handled by re-computing the correct weights necessary to combine the sensor observations at every measurement location.
Choi et al. \cite{Choi_automatica09, Jadaliha_JDSMC12} present a general distributed learning and control approach for sensor networks and apply it to source seeking. The sensed signal is modeled by a network of radial basis functions and recursive least squares are used to obtain the model parameters. The convergence properties of the combined motion control and parameter estimation dynamics are analyzed.
Instead of a sensor network, a \textit{single} vehicle may travel to several sensing locations in order to collect the same measurements \cite{Azuma_TAC12,Zhang07_sourceSeeking, Liu10_sourceSeeking, Stankovic10_extremumSeeking, Ghods11_sourceSeeking}. While costly maneuvers are required to climb the gradient effectively, in our previous work \cite{Atanasov_SourceSeeking_ICRA12} we discussed parameter choices, which enable good performance.

In the model-based scenario, we choose the next configuration for the sensing team by maximizing the mutual information (MI) between the signal source estimate and the expected measurements. Even if all pose and measurement information is available at a central location, evaluating the MI utility function is computationally demanding. Charrow et al. \cite{Charrow_RSS13} focus on approximating MI when the sensed signal is Gaussian and the sensors use a particle filter to estimate the source location. Hoffman et al. \cite{Hoffman_TAC10} compute the expectation over the measurements only for pairs of sensors, thus decreasing the dimension of the required integration. Instead of MI, in this work we approximate the MI gradient. Related work which uses the MI gradient includes \cite{Dames_CDC12}, in which the dimension of the MI gradient is reduced by integrating over binary sensor measurements and only for sensors whose fields of view overlap. A fully distributed approach based on belief consensus is proposed in \cite{Julian_IJRR12}. This paper is also related to consensus control, which seeks agreement in the states of multi-agent dynamical systems. Recent results \cite{Yin_SIAM13,Yu_Auto11} address switching topologies, non-trivial delays, and asynchronous estimation but with the main difference that the sensors agree on their own states, while in this work they need to agree on the exogenous state of the source. 



\textbf{Contributions}: We develop a \textit{distributed} approach for stochastic source seeking using a mobile sensor network, which \textit{does not rely on a model of the signal field and global localization}. Our method uses a finite difference scheme to estimate the signal gradient correctly, \textit{even when the sensor formation is not maintained well}. In the model-based case, we show that a stochastic approximation to the MI gradient \textit{using only a few predicted signal measurements} is enough to provide good control performance. This is in contrast with existing work, which insists on improving the quality of the gradient estimate as much as possible. 


The rest of the paper is organized as follows. In Sec. \ref{sec:prob_form} we describe the considered source-seeking scenarios precisely. Our model-free and model-based approaches are discussed in detail in Sec. \ref{sec:model_free} and Sec. \ref{sec:model_based}, respectively, assuming all-to-all communication among the sensors. Distributed versions are presented and analyzed in Sec. \ref{sec:dist_alg}. Finally, in Sec. \ref{sec:applications} we present an application to wireless radio source localization and compare the performance of the two methods.

\section{Problem Formulation}
\label{sec:prob_form}
Consider a team of $n$ sensing robots with states $\{x_{1,t},\ldots,x_{n,t}\} \subset \mathcal{X} \cong \mathbb{R}^{d_x}$ at time $t$. The states are typically comprised of pose and velocity information but might include other operational parameters too. At a high-level planning stage we suppose that the vehicles have discrete single-integrator dynamics $x_{i,t+1} = x_{i,t} + u_{i,t}$, where $u_{i,t} \in \mathcal{U}$ is the control input to sensor $i$. The task is to localize a \textit{static} signal source, whose unknown state is $y \in \mathcal{Y} \cong \mathbb{R}^{d_y}$. The state captures the source position and other observable properties of interest. At time $t$ each sensor $i$ has access to a noisy measurement $z_{i,t} \in \mathcal{Z} \cong \mathbb{R}^{d_z}$ of the signal generated by $y$:
\begin{equation}
\label{eq:obs_model}
z_{i,t} = h(x_{i,t},y) + v_{i,t},
\end{equation}
where $v_{i,t}$ is the measurement noise, whose values are independent at any pair of times and among sensors. The noise depends on the states of the sensor and the source, i.e. $v_{i,t}(x_{i,t},y)$, but to simplify notation we do not make it explicit. We assume that the noise is zero-mean and has a finite second moment, i.e. $\mathbb{E} v_{i,t} = 0,\; \forall i,t,x_{i,t}$ and $\tr\bigl(\mathbb{E}[ v_{i,t}v_{i,t}^T]\bigr) < \infty$. In the reminder, we use the notation $x_t := \begin{bmatrix} x_{1,t}^T,\ldots,x_{n,t}^T \end{bmatrix}^T\mkern-12mu,\; u_t := \begin{bmatrix} u_{1,t}^T,\ldots,u_{n,t}^T \end{bmatrix}^T\mkern-12mu, \;\; z_t := \begin{bmatrix} z_{1,t}^T,\ldots,z_{n,t}^T \end{bmatrix}^T\mkern-12mu,\;$ and $v_t := \begin{bmatrix} v_{1,t}^T,\ldots,v_{n,t}^T\end{bmatrix}^T\mkern-12mu.$

In the model-free scenario, the sensors simply receive measurements without knowing the signal model $h(\cdot,\cdot)$. We suppose that the team adopts some arbitrary formation, with center of mass $m_t := \sum_{i=1}^n x_{i,t}/n$ at time $t$, which can be enforced using potential fields \cite{Ogren04_multirobot} or convex optimization \cite{Derenick_TRO09}. The sensors use the centroid $m_t$ as the estimate of the source state $y$ at time $t$ and try to lead it towards the true source location, based on the received measurements. Let $f:\mathcal{X} \rightarrow \mathcal{Y}$ be a known transformation, which maps the team centroid to a source estimate. For example, if the robot state space captures both position and orientation, e.g. $\mathcal{X} = SE(2)$, but we are interested only in position estimates for the source, e.g. $\mathcal{Y} = \mathbb{R}^2$, then $f$ would be the projection which extracts the position components from the centroid $m_t \in \mathcal{X}$. We consider the following problem.
\begin{problem}[Model-free Source Seeking]
Assume that the measurement signal in (\ref{eq:obs_model}) is scalar\footnote{The assumption is made only to simplify the presentation of the gradient ascent approach in the model-free case. The approach generalizes to signals of higher dimension.} and its expectation is maximized at the true state $y$ of the source:
\begin{equation}
\label{ass:scalar_measurement}
f^{-1}(y) = \argmax_{x \in \mathcal{X}} h(x,y).
\end{equation}
Generate a sequence of control inputs $u_0, u_1, \ldots$ for the team of sensors in order to drive its centroid $m_t$ towards a maximum of the signal field $h(\cdot,y)$.
\end{problem}


In the model-based case, the sensors have accurate knowledge of $h(\cdot,\cdot)$ which can be exploited to maximize the information that future measurements provide about the signal source. We choose \textit{mutual information} as a measure of informativeness. In order to select appropriate control inputs for the sensors at time $t-1$ we formulate the following optimization problem.
\begin{problem}[Model-based Source Seeking]
\label{prob:modelbased_ss}
Given the sensor poses $x_{t-1} \in \mathcal{X}^n$ and a prior distribution of the source state $y$ at time $t-1$, choose the control input $u_t \in \mathcal{U}^n$, which optimizes the following:
\begin{alignat}{2}
\max_{u_{1,t},\ldots,u_{n,t}} &I(y;z_t \mid x_t)&& \label{prob:max_MI}\\
\text{s.t.} \quad &x_{i,t} = x_{i,t-1}+u_{i,t}, \quad &&i = 1,\ldots, n, \notag\\
&z_{i,t} = h(x_{i,t},y) + v_{i,t}, \quad &&i = 1,\ldots, n. \notag
\end{alignat} 
\end{problem}

We resort to stochastic approximation methods in both scenarios and emphasize their usefulness in simplifying the algorithms while providing theoretic guarantees about the performance.





\section{Model-free Source Seeking}
\label{sec:model_free}
\subsection{Model-free Algorithm}


Our model-free approach is to design an iterative optimization scheme which causes the centroid $m_t$ of the robot formation to ascend the gradient $g(x,y) := \nabla_x h(x,y)$ of the measurement signal. The gradient ascent leads $m_t$ to a (often local) maximum of the signal field, which is appropriate in view of assumption (\ref{ass:scalar_measurement}). In detail, the desired dynamics for the centroid are:
\begin{equation}
\label{eq:mass_dyanmics}
m_{t+1} = m_t + \gamma_t g(m_t,y).
\end{equation}
A complication arises because the sensors do not have access to $g(\cdot,y)$ and can only measure a noisy version of $h(\cdot,y)$ at their current positions. Supposing noise-free measurements for now, the sensors can approximate the signal gradient at the formation centroid via a finite-difference (FD) scheme:
\begin{equation}
\label{eq:grad_approx}
g(m_t,y) = \nabla_x h(m_t,y) = W(x_t) \begin{pmatrix}
h(x_{1,t},y)\\
\vdots\\
h(x_{n,t},y)
\end{pmatrix} - b_t,
\end{equation}
where $W(x_t) \in \mathbb{R}^{d_x \times n}$ is a matrix of FD weights, which depends on the sensor states $x_t$, and $b_t \in \mathbb{R}^{d_x}$ captures the error in the approximation. The most natural way to obtain the FD weights is to require that the approximation is exact for a set of test functions $\psi_i, \; i = 1,\ldots,n$, commonly polynomials, which could represent the shape of $g(\cdot,y)$. In particular, the following relation needs to hold:
\begin{equation}
\label{eq:FD_weights}
\begin{bmatrix}
\psi_1(x_{1,t}) & \cdots & \psi_1(x_{n,t})\\
\vdots & & \vdots\\
\psi_n(x_{1,t}) & \cdots & \psi_n(x_{n,t})
\end{bmatrix} W(x_t)^T = \begin{bmatrix}
\frac{\partial}{\partial x} \psi_1(m_t)\\
\vdots\\
\frac{\partial}{\partial x} \psi_n(m_t)
\end{bmatrix},
\end{equation}
where $\frac{\partial}{\partial x} \psi_i(x)$ is a row vector of partial derivatives. When $x_{i,t} \in \mathbb{R}$ the most common set of test functions are the monomials $\psi_i(x) = x^{i-1}$, in which case (\ref{eq:FD_weights}) becomes a Vandermonde system. The standard (monomial) FD approach is problematic when the states $x_{i,t}$ are high-dimensional and not in a lattice configuration because the system in (\ref{eq:FD_weights}) becomes ill-conditioned. These difficulties are alleviated by using radial basis functions (RBFs) $\psi_i(x) := \phi(\|x - x_{i,t}\|)$ as test functions. In particular, using Gaussian RBFs, $\phi(d) := e^{-(\delta d)^2}$, with \textit{shape parameter} $\delta >0$, guarantees that (\ref{eq:FD_weights}) is non-singular \cite{Fornberg_CMA13}. Then, the FD weights obtained from (\ref{eq:FD_weights}) as a function of $x_t$ are:
\begin{equation}
\label{eq:rbffd_weights}
W(x_t) = R(x_t)^T\Phi(x_t)^{-T},
\end{equation}
where for $x \in \mathcal{X}^n$, we let $\Phi_{ij}(x) := e^{-\delta^2\|x_j-x_i\|_2^2}$ and
\begin{equation}
\label{eq:rbffd_rhs}
R(x) := \begin{bmatrix}
  2\delta^2 e^{-\delta^2\|x_1-\sum_{i=1}^nx_i/n\|_2^2}(x_1-\sum_{i=1}^nx_i/n)^T\\
  \vdots\\
  2\delta^2 e^{-\delta^2\|x_n-\sum_{i=1}^nx_i/n\|_2^2}(x_n-\sum_{i=1}^nx_i/n)^T
\end{bmatrix}.
\end{equation}

Since the measurements are noisy, sensor $i$ can observe only $z_{i,t}$ rather than $h(x_{i,t},y)$. As a result, the gradient ascent (\ref{eq:mass_dyanmics}) can be implemented only approximately via $g(m_t,y) \approx W(x_t) z_t$ instead of (\ref{eq:grad_approx}) and with the additional complication that the measurement noise makes the iterates $m_t$ random. Our stochastic model-free source seeking algorithm is:
\begin{align}
\label{eq:approx_mass_dyanmics}
m_{t+1} = m_t + \gamma_t W(x_t) z_t.
\end{align}

The convergence of similar source seeking schemes is often studied in a deterministic framework \cite{Ogren04_multirobot} by assuming that the noise can be neglected, which is difficult to justify. In the following section, we show that the center of mass $m_t$, following the dynamics (\ref{eq:approx_mass_dyanmics}) with appropriately chosen step-sizes $\gamma_t$, converges to a neighborhood of a local maximum of $h(\cdot,y)$. Assuming all-to-all communication or a centralized location, which receives all state and measurement information from the sensors, the stochastic gradient ascent (\ref{eq:approx_mass_dyanmics}) can be implemented as is. It requires that the sensors are localized relative to one another, i.e. in the inertial frame of one sensor, but not globally, in the world frame. Notably, it is also not important to maintain a rigid sensor formation as the correct FD weights necessary to combine the observations are re-computed at every measurement location. The next section shows that the only requirement is that the sensor team is not contained in a subspace of $\mathbb{R}^{d_x}$ when measuring (e.g. at least 3 non-collinear sensors are needed for $d_x = 2$).

\subsection{Convergence Analysis}
To carry out the convergence analysis of the stochastic gradient ascent in \eqref{eq:approx_mass_dyanmics}, we resort to the theory of stochastic approximations \cite{Kushner03_stochasticApproximation, Borkar_SA08}. It is sufficient to consider the following stochastic approximation (SA) algorithm:
\begin{equation}
\label{eq:sa_eq}
m_{t+1} = m_t + \gamma_t(g(m_t) + b_t + D_t),
\end{equation}
where $b_t$ is a bias term, $D_t$ is a random zero-mean perturbation, $\gamma_t$ is a small step-size, and $m_t$ is a random sequence whose asymptotic behavior is of interest. The main result is that the iterates $m_t$ in \eqref{eq:sa_eq} asymptotically follow the integral curves of the ordinary differential equation (ODE) $\dot{m} = g(m)$. Since in our case with a fixed source state $y$, $g(m) := \nabla_x h(m,y)$, the ODE method \cite{Ljung_TAC77}, \cite[Ch.2]{Borkar_SA08} shows that the iterates $\{m_t\}$ almost surely (a.s.) converge to the set $\{x \mid \nabla_x h(x,y) = 0\}$ of critical points of $h(\cdot,y)$ under the following assumptions\footnote{While assumptions (A1)-(A5) are sufficient to prove the convergence in our application, they are by no means the weakest possible. If necessary some can be relaxed using the results in stochastic approximation \cite{Kushner03_stochasticApproximation,Borkar_SA08}.}:
\begin{enumerate}
  \item[(A1)] The map $g$ is Lipschitz continuous\footnote{\label{ftn:lip_con}Given two metric spaces $(\mathcal{X},d_x)$ and $(\mathcal{G},d_g)$, a function $g:\mathcal{X} \to \mathcal{G}$ is \textit{Lipschitz continuous} if there exists a real constant $0\leq L < \infty$ such that: $d_g(g(x_1),g(x_2)) \leq Ld_x(x_1,x_2),\; \forall x_1,x_2\in\mathcal{X}$.}.
  \item[(A2)] Step-sizes $\{\gamma_t\}$ are positive scalars satisfying:\begin{center}$\sum_{t=0}^\infty \gamma_t = \infty$ and $\sum_{t=0}^\infty \gamma_t^2 < \infty$.\end{center}
  \item[(A3)] $\{D_t\}$ is martingale difference sequence with respect to the family of $\sigma$-algebras $\mathcal{F}_t := \sigma(m_0,D_s,0\leq s \leq t)$, i.e. $D_t$ is measurable with respect to $\mathcal{F}_t$, $\mathbb{E}[\|D_t\|] < \infty$, and $\mathbb{E}[D_{t+1} \mid \mathcal{F}_t] = 0$ almost surely (a.s.) for all $t \geq 0$. Also, $D_t$ is square-integrable with $\mathbb{E}[ \|D_{t+1}\|^2 \mid \mathcal{F}_t ] \leq K (1+\|m_t\|^2)$ a.s. for $t \geq 0$ and some constant $K > 0$.
  \item[(A4)] $\{m_t\}$ is bounded, i.e. $\sup_t \|m_t\| < \infty$ a.s.
  \item[(A5)] $\{b_t\}$ is bounded and $b_t \to 0$ a.s. as $t \to \infty$.
\end{enumerate}
The proposed source-seeking algorithm \eqref{eq:approx_mass_dyanmics} can be converted to the SA form \eqref{eq:sa_eq} as follows:
\begin{align*}
m_{t+1} &= m_t + \gamma_t W(x_t) z_t = m_t + \gamma_t W(x_t)\begin{pmatrix}
h(x_{1,t},y)+v_{1,t}\\
\vdots\\
h(x_{n,t},y)+v_{n,t}
\end{pmatrix}\\
&= m_t + \gamma_t\left(g(m_t,y) + b_t + W(x_t) v_t \right),
\end{align*}
where the second equality follows from \eqref{eq:grad_approx}. Assumption (A1) ensures that $\dot{m} = g(m,y)$ has a unique solution for any initial condition and any fixed source state $y$. Assumption (A2) can be satisfied by an appropriate choice of the step-size, e.g. $\gamma_t = 1/(t+1)$. The selection of proper step-sizes is an important practical issue that is not emphasized in this paper but is discussed at length in \cite{Atanasov_SourceSeeking_ICRA12,Kushner03_stochasticApproximation, Spall03_stochasticApproximation}. We can satisfy (A4) by requiring that the environment $\mathcal{X}$ of the sensors is bounded and if necessary use a projected version of the gradient ascent \cite[Ch.5.4]{Borkar_SA08}. This also ensures that the FD weights are bounded and in turn (A3) is satisfied:
\begin{align*}
\left(\mathbb{E}\|D_t\|_2\right)^2 \leq& \mathbb{E} \|D_t\|_2^2 = \mathbb{E}\left[ \|D_t\|_2^2 \mid \mathcal{F}_{t-1} \right] = \mathbb{E}\left[ \|W(x_t)v_t\|_2^2 \right]\\
\leq \|W&(x_t)\|_2^2 \mathbb{E}\|v_t\|_2^2 = \|W(x_t)\|_2^2 \sum_{i=1}^n \tr(\mathbb{E}[v_{i,t}v_{i,t}^T]) < \infty\\
\mathbb{E}\left[ D_t \mid \mathcal{F}_{t-1}\right]&=\mathbb{E}[W(x_t)v_t] = W(x_t) \mathbb{E} v_t = 0,
\end{align*}
since the measurement noise in \eqref{eq:obs_model} is uncorrelated in time and has zero mean and a finite second moment. Note that the error term in (\ref{eq:grad_approx}) violates (A5) because it does not converge to $0$. However, if we ensure that the sensor formation is not contained in a subspace of $\mathbb{R}^{d_x}$, then $b_t$ remains bounded by some $\epsilon_0 > 0$, i.e. $\sup_{t} \|b_t\| \leq \epsilon_0$. Then, the argument in \cite[Ch.5, Thm.6]{Borkar_SA08} shows that the iterates $m_t$ converge a.s. to a small neighborhood of a local maximum, whose size depends on $\epsilon_0$. The result is summarized below.

\begin{theorem}
Suppose that the gradient $g(x,y) = \nabla_x h(x,y)$ of the measurement signal is Lipschitz continuous\textsuperscript{\ref{ftn:lip_con}} in $x$, the step-sizes $\gamma_t$ in \eqref{eq:approx_mass_dyanmics} satisfy (A2), the sensor state space $\mathcal{X}$ is bounded, and the sensor formation is not contained in a subspace of $\mathbb{R}^{d_x}$ at the measurement locations. Then, algorithm \eqref{eq:approx_mass_dyanmics} converges to a small neighborhood around a local maximum of the signal field $h(\cdot,y)$.
\end{theorem}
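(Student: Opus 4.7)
My plan is to put the recursion \eqref{eq:approx_mass_dyanmics} into the canonical stochastic-approximation (SA) form \eqref{eq:sa_eq} and then invoke the ODE method. Substituting \eqref{eq:obs_model} and the finite-difference identity \eqref{eq:grad_approx} into \eqref{eq:approx_mass_dyanmics} yields
\[
m_{t+1} = m_t + \gamma_t\bigl(g(m_t,y) + b_t + D_t\bigr),
\]
with $D_t := W(x_t) v_t$ the stochastic innovation and $b_t$ the deterministic finite-difference truncation bias. Since $g(\cdot,y)$ is the gradient of $h(\cdot,y)$, its zero set coincides with the critical points of the signal field and its stable equilibria are precisely the local maxima.

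Next I would verify hypotheses (A1)--(A4). (A1) is the Lipschitz assumption in the statement, and (A2) is the standing step-size condition. For (A3), boundedness of $\mathcal{X}$ together with the non-degeneracy of the sensor formation imply that $\Phi(x_t)^{-1}$ in \eqref{eq:rbffd_weights} stays uniformly bounded, hence $\sup_t\|W(x_t)\|<\infty$; combined with the zero-mean, temporally independent, finite-second-moment noise from \eqref{eq:obs_model} this gives $\mathbb{E}[D_{t+1}\mid\mathcal{F}_t]=0$ and $\mathbb{E}[\|D_{t+1}\|^2\mid\mathcal{F}_t]\le K(1+\|m_t\|^2)$ a.s. (A4) follows from boundedness of $\mathcal{X}$, projecting the iterate back into $\mathcal{X}$ if needed, as in \cite[Ch.5.4]{Borkar_SA08}.

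The main obstacle is that (A5) fails: the truncation bias $b_t$ in \eqref{eq:grad_approx} is generally nonzero and does not vanish as $t\to\infty$, so exact convergence to a critical point cannot be expected. My plan here is to show only the weaker property $\sup_t\|b_t\|\le\epsilon_0$ for some $\epsilon_0>0$, again using the non-degeneracy of the formation together with the Lipschitz bound on $g$ and the explicit RBF-FD expressions \eqref{eq:rbffd_weights}--\eqref{eq:rbffd_rhs}, and then to invoke the perturbed-ODE convergence theorem \cite[Ch.5, Thm.6]{Borkar_SA08}. That result replaces (A5) by uniform boundedness of the bias and concludes that the iterates almost surely enter and remain in an $O(\epsilon_0)$-neighborhood of an attractor of $\dot m = g(m,y)$. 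Since the attractors of this gradient ODE are exactly the local maxima of $h(\cdot,y)$, the claim follows.

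I expect the only real subtlety to be the uniform control of $\|b_t\|$: we must translate the geometric non-degeneracy of the formation (the sensors do not lie in a proper subspace of $\mathbb{R}^{d_x}$) into a uniform lower bound on the smallest singular value of $\Phi(x_t)$ over all admissible configurations, so that $W(x_t)$, and hence the truncation bias, remain bounded along every sample path. Aside from this geometric estimate, the argument is a direct application of the SA machinery already cited, and no new technical tools are needed.
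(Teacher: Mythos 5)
Your proposal is correct and follows essentially the same route as the paper: rewrite \eqref{eq:approx_mass_dyanmics} in the SA form \eqref{eq:sa_eq} with $D_t = W(x_t)v_t$, verify (A1)--(A4) using the Lipschitz hypothesis, the step-size condition, the bounded workspace (with projection if needed), and the boundedness of the FD weights, then observe that (A5) fails but $\sup_t\|b_t\|\le\epsilon_0$ under the non-degeneracy of the formation, and conclude via \cite[Ch.5, Thm.6]{Borkar_SA08}. If anything, your remark that the geometric non-degeneracy must be converted into a uniform lower bound on the smallest singular value of $\Phi(x_t)$ is slightly more careful than the paper's own treatment, which asserts the boundedness of $W(x_t)$ and $b_t$ without spelling out that estimate.
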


\section{Model-based Source Seeking}
\label{sec:model_based}

\subsection{Model-based Algorithm}
\label{sec:cen_model_based}
In this section, we address Problem \ref{prob:modelbased_ss} assuming all-to-all communication. The sensors can follow the gradient of the cost function in (\ref{prob:max_MI}) to reach a local maximum:
\begin{equation}
\label{eq:grad_mi}
x_{t+1} = x_t + \gamma_t \nabla_x I(y;z_t|x) \vert_{x = x_t},
\end{equation}
where $\gamma_t$ is the step-size at time $t$. Let $p(z \mid y,x)$ denote the probability density function (pdf) of the measurement signal in \eqref{eq:obs_model}. Let $p_t(y)$ be the pdf used by the sensors at time $t$ to estimate the state of the source, which is assumed independent of $x_t$. The following theorem gives an expression for the mutual information (MI) gradient provided that $p(z \mid y,x)$ is differentiable with respect to the sensor configurations.
\begin{theorem}[\cite{Schwager_ISRR11} ] \label{thm:MIG}
Let random vectors $Y$ and $Z$ be jointly distributed with pdf $p(y,z \mid x)$, which is differentiable with respect to the parameter $x \in \mathcal{X}$. Suppose that the support of $p(y,z \mid x)$ does not depend on $x$. Then, the gradient with respect to $x$ of the mutual information between $Y$ and $Z$ is
\[
\nabla_x I(Y;Z|x) = \int \int \bigl(\nabla_x p(y,z \mid x)\bigr) \log \frac{p(z \mid y,x)}{p(z \mid x)} dy dz,
\]
where $p(z \mid y,x)$ and $p(z \mid x)$ are the marginal and the conditional pdfs of $Z$. 
\end{theorem}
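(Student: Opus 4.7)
The plan is to differentiate under the integral sign and then show that a boundary-type term vanishes because conditional and marginal densities integrate to one. Starting from the definition
\[
I(Y;Z \mid x) = \int\int p(y,z \mid x)\,\log\frac{p(z \mid y,x)}{p(z \mid x)}\, dy\, dz,
\]
I would first invoke the hypothesis that the support of $p(y,z\mid x)$ is independent of $x$ (together with implicit smoothness/integrability of the integrand) to pull $\nabla_x$ inside the double integral. This is the step where a dominated-convergence style justification is needed; it is the main technical obstacle, but the fixed-support hypothesis is precisely what removes the boundary contributions that would otherwise obstruct the interchange.

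Next, I would apply the product rule to the integrand, yielding two pieces:
\[
\nabla_x I = \underbrace{\int\int \bigl(\nabla_x p(y,z \mid x)\bigr)\log\frac{p(z \mid y,x)}{p(z \mid x)}\, dy\, dz}_{\text{desired term}} + \underbrace{\int\int p(y,z \mid x)\,\nabla_x\log\frac{p(z \mid y,x)}{p(z \mid x)}\, dy\, dz}_{=:\, T}.
\]
The goal reduces to showing $T = 0$.

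To kill $T$, I would split the log-derivative and handle each half separately. Using $p(y,z\mid x) = p(y\mid x)\,p(z\mid y,x)$ on the first half,
\[
\int\int p(y,z\mid x)\,\frac{\nabla_x p(z\mid y,x)}{p(z\mid y,x)}\, dy\, dz = \int p(y\mid x)\,\nabla_x\!\left(\int p(z\mid y,x)\, dz\right) dy = 0,
\]
since the inner integral is identically $1$. For the second half, marginalizing $y$ first,
\[
\int\int p(y,z\mid x)\,\frac{\nabla_x p(z\mid x)}{p(z\mid x)}\, dy\, dz = \int \nabla_x p(z\mid x)\, dz = \nabla_x\!\int p(z\mid x)\, dz = 0,
\]
again by normalization. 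Both interchanges of $\nabla_x$ with an integral are licensed by the same fixed-support assumption used at the start. Subtracting gives $T=0$, which leaves exactly the expression claimed in the theorem.

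Overall, the structure is: (i) justify differentiation under the integral once, (ii) apply the product rule, and (iii) cancel the auxiliary term by exploiting that conditional and marginal densities are probability densities. The only nontrivial ingredient is step (i); steps (ii) and (iii) are algebraic bookkeeping.
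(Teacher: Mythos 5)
The paper does not prove this theorem; it is imported verbatim from the cited reference \cite{Schwager_ISRR11}, so there is no in-paper proof to compare against. Your argument is the standard one used there and it is correct: write $I(Y;Z\mid x)$ as $\int\int p(y,z\mid x)\log\bigl(p(z\mid y,x)/p(z\mid x)\bigr)\,dy\,dz$, differentiate under the integral (which the fixed-support hypothesis is meant to license), and observe that the two score-function pieces of the remaining term integrate to zero because $\int p(z\mid y,x)\,dz = \int p(z\mid x)\,dz = 1$. The only caveat worth recording is the one you already flag: differentiability and fixed support alone do not rigorously justify the three interchanges of $\nabla_x$ with integration, so a dominated-convergence hypothesis is implicitly being assumed, exactly as in the cited source.
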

Obtaining the MI gradient is computationally very demanding for two reasons. First, an approximate representation is needed for the continuous pdfs in the integral. Second, at time $t$ the integration is over the collection of all sensor measurements $z_t = \begin{bmatrix}z_{1,t}^T,\ldots,z_{n,t}^T\end{bmatrix}^T$, which can have a very high dimension in practice. As mentioned in Sec. \ref{sec:intro}, most existing work has focused on accurate approximations. However, Thm. \ref{thm:MIG} allows us to make a key observation:
\begin{align}
\nabla_x I(y;z_t|x_t) &\phantom{:}= \mathbb{E}[ \pi_t(z_t,x_t) \mid x_t] = \int_{\mathcal{Z}} \pi_t(z_t,x_t) p_t(z_z \mid x_t) dz_t,\notag\\
\text{where}\qquad&\label{eq:mi_expectation}\\
\pi_t(z,x) &:= \int_{\mathcal{Y}} \frac{\nabla_x p(z \mid y,x)}{p_t(z \mid x)} p_t(y) \log \frac{p(z \mid y,x)}{p_t(z \mid x)} dy,\notag\\
p_t(z \mid x) &:= \int_{\mathcal{Y}} p(z \mid y, x) p_t(y) dy,\notag
\end{align}
where the independence between $y$ and $x_t$ is used for the decomposition: $p_t(y,z \mid x) = p(z \mid y,x) p_t(y)$. Relying on the signal model, the sensors can simulate realizations of the random variable $z_t$, iid with pdf $p_t(z_t \mid x_t)$. Instead of computing the integral in \eqref{eq:mi_expectation} needed for the gradient ascent \eqref{eq:grad_mi}, we propose the following stochastic algorithm for model-based source seeking: 
\begin{equation}
\label{eq:sra}
x_{t+1} = x_t + \gamma_t \pi_t(z_{t},x_t).
\end{equation}
This algorithm can be written in the SA form \eqref{eq:sa_eq} as follows:
\begin{align*}
x_{t+1} &= x_t + \gamma_t \mathbb{E}_{z_{t}} [\pi_t(z_t,x_t) \mid x_t] + \gamma_t D_t\\
&= x_t + \gamma_t \left(\nabla_x I(y;z_t \mid x_t) + D_t\right),
\end{align*}
where $D_{t} := \pi_t(z_{t},x_t)-\mathbb{E} [\pi_t(z_{t},x_t) \mid x_t]$. To evaluate the convergence we consider assumptions (A1)-(A5) again. As before, satisfaction of (A2) is achieved by a proper step-size choice, while (A4) holds due to the bounded workspace $\mathcal{X}$. Assumption (A5) is satisfied because in this case the bias term is zero. To verify (A3), note that $D_t$ is measurable with respect to $\mathcal{F}_t = \sigma(x_0,D_s,0\leq s\leq t)$ and for $t \geq 1$:
\begin{align*}
\mathbb{E}[D_{t} \mid \mathcal{F}_{t-1}] &= \mathbb{E}\left[ \pi_t(z_{t},x_t)-\mathbb{E} [\pi_t(z_{t},x_t) \mid x_t] \mid \mathcal{F}_{t-1}\right]\\
&= \mathbb{E}[\pi_t(z_{t},x_t) \mid \mathcal{F}_{t-1}] - \mathbb{E} [\pi_t(z_{t},x_t) \mid x_t] = 0.
\end{align*}
Finally, if $p(z \mid y,x)$ and its gradient $\nabla_x p(z \mid y,x)$ are sufficiently regular (e.g. the former is bounded away from zero and the latter is Lipschitz continuous and bounded), the square integrability condition on $D_t$ is satisfied.

This analysis demonstrates that even if a single $z_t$ sample is used to approximate the MI gradient (instead of the integration in \eqref{eq:mi_expectation}), the stochastic gradient ascent \eqref{eq:sra} will converge to a local maximum of the mutual information between the source state and the sensor measurements.

\subsection{Implementation Details}

To implement the stochastic gradient ascent in (\ref{eq:sra}), the sensors need to propagate $p_t(\cdot)$ over time and sample from $p_t(\cdot \mid x_t)$. We achieve the first requirement by a particle filter \cite[Ch.4]{thrun_probrob05}, which approximates $p_t$ by a set of weighted samples $\{w_{t}^m,y_{t}^m\}_{m=1}^{N_p}$ as follows: $p_t(y) \approx \sum_{m=1}^{N_p} w_t^m \delta(y - y_t^m)$, where $\delta(\cdot)$ is a Dirac delta function. Using the particle set we can write $\pi$ and the measurement pdf as follows:
\begin{align*}
\pi_t(z,x) &\approx \sum_{m=1}^{N_p} w_t^m \frac{\nabla_x p(z \mid y_t^m,x)}{p(z\mid x)}  \log \frac{p(z \mid y_t^m,x)}{p(z \mid x)}\\
p_t(z \mid x) &\approx \sum_{m=1}^{N_p} w_t^m p(z \mid y_t^m,x),
\end{align*}
where $p(z \mid y,x)$ and its gradient can be decomposed further:
\begin{align*}
p(z \mid y, x) &= \prod_{j=1}^{n} p(z_j \mid y, x_j)\\
\frac{\partial p(z \mid y, x)}{\partial x_k} &=  \frac{\partial p(z_k \mid y, x_k)}{\partial x_k} \prod_{j\neq k} p(z_j \mid y, x_j) 
\end{align*}
due to the independence of the observations in \eqref{eq:obs_model}. In practice, there is a trade-off between moving the sensors and spending time approximating the gradient of the mutual information \eqref{eq:mi_expectation}. The stochastic approximation in \eqref{eq:sra} uses a single sample from $p_t(\cdot \mid x_t)$ but if sampling is fast compared to the time needed to relocate the sensors, more samples can be used to get a better estimate of the gradient. We use Monte Carlo integration, which proceeds as follows:
\begin{enumerate}
	\item Sample $\bar{m}(l)$ from the discrete distribution $w_t^1,\ldots, w_t^{N_p}$.
	\item Sample $\bar{z}(l)$ from the pdf $p(\cdot \mid y_t^{\bar{m}(l)},x_t)$.
	\item Repeat 1. and 2. to obtain $N_z$ samples $\{\bar{z}(l)\}_{l=1}^{N_z}$.
	\item Approximate: $\nabla_x I(y;z_t|x_t) \approx \frac{1}{N_z} \sum_{l=1}^{N_z} \pi_t(\bar{z}(l),x_t)$.
\end{enumerate}
Note that the advantage of improving the gradient estimate is not clear and should not necessarily be prioritized over the sensor motion. The SA techniques show that even an approximation with \textit{a single sample} is sufficient to make progress. In contrast, the related approaches mentioned in the introduction insist on improving the quality of the gradient estimate as much as possible. Depending on the application, this can slow down the robot motion and possibly make the algorithms impractical. Our more flexible approach adds an extra degree of freedom by allowing a trade-off between the gradient estimation quality and the motion speed of the sensors.

\section{Distributed Algorithms}
\label{sec:dist_alg}
In many scenarios, all-to-all communication is either infeasible or prone to failures. In this section, we present distributed versions of the model-free and the model-based algorithms. Since the model-free algorithm should be applicable to light-weight platforms with no global localization capabilities, the sensors use noisy relative measurements of their neighbors's locations to estimate the collective formation state. In the model-based case the sensors may spread around the environment and we are forced to assume that each agent is capable of estimating its own state $x_{i,t}$. We begin with preliminaries on distributed estimation.



\subsection{Preliminaries on Distributed Estimation}
\label{sec:dist_alg_prelim}
\label{sec:dist_estm}
Let the communication network of the $n$ sensors be represented by  an undirected graph $G = (\{1,\ldots,n\},E)$. Suppose that the sensors need to estimate an unknown static parameter $\theta^* \in \Theta$ in a distributed manner, where $\Theta \subseteq \mathbb{R}^{d_\theta}$ is a convex parameter space. At discrete times $k \in \mathbb{N}$, each agent $i$ observes a random signal $s_i(k) \in \mathbb{R}^{d_i}$ drawn from a distribution with conditional pdf $l_i(\cdot \mid \theta)$. Assume that the signals are iid over time and independent from the observations of all other sensors. The signals observed by a single agent, although potentially informative, do not reveal the parameter completely, i.e. each agent faces a local identification problem. We assume, however, that the parameter is identifiable if one has access to the signals observed by all agents. In order to aggregate the information provided to it over time - either through observations or communication with neighbors - each sensor $i$ holds and updates a pdf $p_{i,k}:\Theta \rightarrow \mathbb{R}_{\geq 0}$ over the parameter space as follows:
\begin{align}
p_{i,k+1}(\theta) &= \eta_{i,k} l_i(s_i(k+1) \mid \theta) \prod_{j \in \mathsf{N}_i \cup \{i\}} \bigl(p_{j,k}(\theta) \bigr)^{a_{ij}} \notag\\
\hat{\theta}_{i}(k) &\in \argmax_{\theta \in \Theta} p_{i,k}(\theta), \label{eq:dist_estm}
\end{align}
where $\eta_{i,k}$ is a normalization constant ensuring that $p_{i,k+1}$ is a proper pdf, $\mathsf{N}_i$ is the set of nodes (neighbors) connected to sensor $i$, and $a_{ij}$ are weights such that $\sum_{j \in \mathsf{N}_i \cup \{i\}} a_{ij} = 1$. The update is the same as the standard Bayes rule with the exception that sensor $i$ does not just use its own prior but a \textit{geometric average} of its neighbors' priors. Given that $G$ is connected, the authors of \cite{Alireza_cdc2010} show that the distributed estimator (\ref{eq:dist_estm}) is weakly consistent\footnote{\label{ftn:weak_consistency}Weak consistency means that the estimates $\hat{\theta}_{i}(k)$ converge in probability to $\theta^*$, i.e. $\displaystyle{\lim_{k\to\infty} \mathbb{P}\bigl(\|\hat{\theta}_{i}(k) - \theta^*\| \geq \epsilon \bigr) = 0}$ for any $\epsilon > 0$ and all $i$.} under broad assumptions on the signal models $l_i(\cdot \mid \theta)$. The results in \cite{shahrampour_cdc13, Alireza_TAC10} suggest that this algorithm is even applicable to a time-varying graph topology with asynchronous communication.



\paragraph{Specialization to Gaussian distributions}
We now specialize the general scheme of Rad and Tahbaz-Salehi \cite{Alireza_cdc2010} to Gaussian distributions. To our knowledge, this specialization is new and the theorem obtained below (Thm. \ref{thm:dist_estm_L2}) shows that the resulting distributed linear Gaussian estimator is mean-square consistent\footnote{Mean-square consistency means that the estimates $\hat{\theta}_i(k)$ converge in mean-square to $\theta^*$, i.e. $\displaystyle{\lim_{k\to\infty}\mathbb{E} \left[\|\hat{\theta}_i(k) - \theta^*\|^2 \right]=0}$ for all $i$.}, which is stronger than the weak consistency\textsuperscript{\ref{ftn:weak_consistency}} shown in \cite[Thm.1]{Alireza_cdc2010}. Suppose that the agents' measurement signals are \textit{linear} in the parameter $\theta^*$ and perturbed by Gaussian noise:
\begin{equation}
\label{eq:lin_ga_obs}
s_{i}(k) = H_i \theta^* + \epsilon_{i}(k), \quad \epsilon_{i}(k) \sim \mathcal{N}(0,E_i), \quad \forall i.
\end{equation}
Let $\mathcal{G}(\omega,\Omega)$ denote a Gaussian distribution (in information space) with mean $\Omega^{-1}\omega$ and covariance matrix $\Omega^{-1}$. Since the private observations (\ref{eq:lin_ga_obs}) are linear Gaussian, without loss of generality the pdf $p_{i,k}$ of agent $i$ is the pdf of a Gaussian $\mathcal{G}(\omega_{i,k},\Omega_{i,k})$. Exploiting that the parameter $\theta^*$ is static, the update equation of the distributed filter in (\ref{eq:dist_estm}), specialized to Gaussian distributions, is:
\begin{align}
\omega_{i,k+1} &\phantom{:}= \sum_{j \in \mathsf{N}_i \cup \{i\}} a_{ij} \omega_{j,k} + H_i^T E_i^{-1} s_{i}(k),\notag\\
\Omega_{i,k+1} &\phantom{:}= \sum_{j \in \mathsf{N}_i \cup \{i\}} a_{ij} \Omega_{j,k} + H_i^T E_i^{-1} H_i,\notag\\
\hat{\theta}_i(k) &:= \Omega_{i,k}^{-1}\omega_{i,k}.\label{eq:gauss_dist_estm}
\end{align}
In this linear Gaussian case, we prove (Appendix B) a strong result about the quality of the estimates in \eqref{eq:gauss_dist_estm}.
\begin{theorem}
\label{thm:dist_estm_L2}
Suppose that the communication graph $G$ is connected and the matrix $\begin{bmatrix} H_1^T & \hdots & H_n^T \end{bmatrix}^T$ has rank $d_\theta$. Then, the estimates \eqref{eq:gauss_dist_estm} of all agents converge in mean square to $\theta^*$, i.e. $\displaystyle{\lim_{k \to \infty} \mathbb{E} \left[ \|\hat{\theta}_{i}(k) - \theta^*\|^2\right] = 0, \; \forall i}$.
\end{theorem}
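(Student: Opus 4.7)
The plan is to exploit the linearity of the information-form updates. Writing $M_i := H_i^T E_i^{-1} H_i$ and $A := [a_{ij}]\in\mathbb{R}^{n\times n}$ for the row-stochastic weight matrix, the $\Omega$-recursion is deterministic and the $\omega$-recursion is affine in the noise history, so both unroll into closed form. Subtracting $\Omega_{i,k}\theta^*$ on the second line cancels the signal contributions $M_j\theta^*$:
\begin{align*}
\Omega_{i,k} &= \sum_{j=1}^n (A^k)_{ij}\,\Omega_{j,0} + \sum_{t=0}^{k-1}\sum_{j=1}^n (A^t)_{ij}\,M_j,\\
\omega_{i,k}-\Omega_{i,k}\theta^* &= \sum_{j=1}^n (A^k)_{ij}(\omega_{j,0}-\Omega_{j,0}\theta^*) + \sum_{t=0}^{k-1}\sum_{j=1}^n (A^{k-1-t})_{ij}\,H_j^T E_j^{-1}\,\epsilon_j(t).
\end{align*}
The exact error is then $\hat\theta_i(k)-\theta^* = \Omega_{i,k}^{-1}(\omega_{i,k}-\Omega_{i,k}\theta^*)$.

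The key step is a linear-in-$k$ lower bound on $\Omega_{i,k}$. Since $G$ is connected, $A$ is irreducible and its unique stationary distribution $\pi$ has strictly positive entries, so the Cesaro averages $\frac{1}{k}\sum_{t=0}^{k-1}(A^t)_{ij}$ converge to $\pi_j$ for every $i,j$. Hence $\frac{1}{k}\Omega_{i,k}\to\Omega_\infty := \sum_{j=1}^n \pi_j M_j$. The full-column-rank hypothesis on $[H_1^T,\ldots,H_n^T]^T$, combined with $E_j\succ 0$ and $\pi_j>0$, forces $\Omega_\infty\succ 0$: for any $v\ne 0$, $v^T\Omega_\infty v = \sum_j \pi_j v^T H_j^T E_j^{-1} H_j v > 0$ because not all $H_j v$ can vanish. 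Therefore $\Omega_{i,k}\succeq \tfrac{k}{2}\Omega_\infty$ for all sufficiently large $k$ and $\|\Omega_{i,k}^{-1}\|\le c/k$ for some constant $c>0$ independent of $i$.

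Given this bound, I split the mean-square error into a deterministic transient and a noise term. The first summand of $\omega_{i,k}-\Omega_{i,k}\theta^*$ is bounded in norm by a constant (since $\sum_j(A^k)_{ij}=1$ and the initial priors are fixed), so it contributes $O(1/k^2)$ to $\mathbb{E}\|\hat\theta_i(k)-\theta^*\|^2$. For the stochastic part $R_{i,k} := \sum_{t,j}(A^{k-1-t})_{ij}H_j^T E_j^{-1}\epsilon_j(t)$, independence of the $\epsilon_j(t)$ across $(j,t)$ and $\mathrm{Cov}(\epsilon_j(t))=E_j$ give
$$\mathbb{E}\|R_{i,k}\|^2 = \sum_{t=0}^{k-1}\sum_{j=1}^n (A^{k-1-t})_{ij}^2\,\tr(H_j^T E_j^{-1} H_j) \le C\sum_{t=0}^{k-1}\sum_{j=1}^n (A^{k-1-t})_{ij} = Ck,$$
where I use the elementary inequality $(A^s)_{ij}^2\le (A^s)_{ij}$ (entries lie in $[0,1]$) and row-stochasticity. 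Combining with $\|\Omega_{i,k}^{-1}\|^2=O(1/k^2)$ yields $\mathbb{E}\|\Omega_{i,k}^{-1}R_{i,k}\|^2 = O(1/k)$, so $\mathbb{E}\|\hat\theta_i(k)-\theta^*\|^2 = O(1/k)\to 0$ for every $i$.

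The principal obstacle is the linear lower bound on $\lambda_{\min}(\Omega_{i,k})$: without connectivity some entry of $\pi$ could vanish and isolated sensors would never integrate information from informative $H_j$, while without the joint-rank hypothesis $\Omega_\infty$ would be singular along unobservable directions and $\|\Omega_{i,k}^{-1}\|$ would fail to contract. Everything else --- the closed-form expansion, the covariance computation for $R_{i,k}$, and the trick $(A^s)_{ij}^2\le (A^s)_{ij}$ that upgrades the naive $O(k^2)$ bound on $\mathbb{E}\|R_{i,k}\|^2$ to $O(k)$ --- is essentially routine once the Perron-Frobenius/ergodic input is in place.
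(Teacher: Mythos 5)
Your proposal is correct and follows essentially the same route as the paper's Appendix B: unroll the linear information-form recursions, use Perron--Frobenius and Ces\'aro averaging to show $\tfrac{1}{k}\Omega_{i,k}\to\sum_j\pi_jM_j\succ0$ under connectivity and the joint rank condition, and bound the noise covariance via $(A^s)_{ij}^2\le(A^s)_{ij}$ to get an $O(1/k)$ mean-square error. The only differences are cosmetic (operator-norm bounds versus the paper's explicit trace/L\"owner-ordering computation).
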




\paragraph{Specialization to particle distributions} 
Suppose that the pdf $p_{i,k}$ is represented by a set of particles $\{w_{i,k}^m, \theta_{i,k}^m\}_{m=1}^{N_p}$, which are \textit{identical} for all sensors initially (at $k = 0$). Since the parameter $\theta^*$ is stationary, the particle positions $\theta_{i,k}^m$ will remain the same across the sensors for all time. The update equation of the distributed filter in \eqref{eq:dist_estm}, specialized to particle distributions, only needs to propagate the particle importance weights $w_{i,k}^m$ and is summarized in Alg. \ref{alg:dist_part_filt}.
\begin{algorithm}[htb]
\caption{Distributed Particle Filter at Sensor $i$}
\begin{algorithmic}[1]
\footnotesize
\State \textbf{Input}: Particle sets $\{w_{j,k}^m, \theta_{j,0}^m\}$ for $m=1,\ldots,N_p$ and $j \in \mathsf{N}_i \cup \{i\}$, private signal $s_{i}(k+1)$, and pdf $l_i(\cdot\mid \cdot)$
\State \textbf{Output}: Particle weights $\{w_{i,k+1}^m\}$ for $m=1,\ldots,N_p$
\State Average priors: $\bar{w}_{i,k}^m \gets \exp \left( \sum_{j \in \mathsf{N}_i \cup \{i\}} a_{ij} \log(w_{j,k}^m) \right)$
\State Update: $w_{i,k+1}^m \gets \bar{w}_{i,k}^m l_i(s_i(k+1) \mid \theta_{i,0}^m)$ for $m=1,\ldots,N_p$
\State Normalize the weights
\State \textbf{return} $\{w_{i,k+1}^m\}$ for $m=1,\ldots,N_p$
\end{algorithmic}
\label{alg:dist_part_filt}
\end{algorithm}

\subsection{Distributed Model-free Algorithm}
\label{sec:dist_mf}
To distribute the model-free algorithm \eqref{eq:approx_mass_dyanmics}, the sensor formation needs to estimate its configuration $x_t$, the centroid $m_t$, and the stochastic approximation to the signal gradient $W(x_t)z_t$ at each measurement location (i.e. at each time $t$) using only local information. We introduce a fast time-scale $k = 0,1,\ldots$, which will be used for the estimation procedure at each time $t$. During this the sensors remain stationary and we drop the $t$ index to simplify the notation. As mentioned earlier, we suppose that each sensor $i$ receives a relative measurement of the state of each of its neighbors $j \in \mathsf{N}_i$:
\begin{equation}
\label{eq:rel_mes}
s_{ij}(k) = x_j - x_i + \epsilon_{ij}(k), \quad \epsilon_{ij}(k) \sim \mathcal{N}(0,E_{ij}),
\end{equation}
where $\epsilon_{ij}(k)$ is the measurement noise which is independent at any pair of times on the fast time-scale and across sensor pairs. If each sensor manages to estimate the states of the whole sensor formation using the measurements $\{s_{ij}(k)\}$, then each can compute the FD weights in (\ref{eq:rbffd_weights}) on its own.

The distributed linear Gaussian estimator \eqref{eq:gauss_dist_estm} can be employed to estimate the sensor states $x$. Notice that it is sufficient to estimate $x$ in a local frame because neither the finite difference computation (\ref{eq:rbffd_weights}) nor the gradient ascent (\ref{eq:approx_mass_dyanmics}) requires global state information. Assume that all sensors know that sensor 1 is the origin at every measurement location. Let $x^* := \begin{bmatrix}0^T & (x_2-x_1)^T & \cdots & (x_n-x_1)^T \end{bmatrix}^T$ denote the true sensor states in the frame of sensor $1$. Let $\hat{x}^i(k)$ denote the estimate that sensor $i$ has of $x^*$ at time $k$ on the fast time scale. The vector form of the measurement equations \eqref{eq:rel_mes} is:
\begin{equation}
\label{eq:rel_mes_vec}
s(k) = (B \otimes I_{d_x})^T x^* + \epsilon(k),
\end{equation}
where $B$ is the incidence matrix of the communication graph $G$. The measurements \eqref{eq:rel_mes_vec} fit the linear Gaussian model in (\ref{eq:lin_ga_obs}). Since the first element of $x^*$ is always 0, only $(n-1)d_x$ components need to be estimated. As the rank of $B \otimes I_{d_x}$ is also $(n-1)d_x$, Thm. \ref{thm:dist_estm_L2} allows us to use the distributed estimator \eqref{eq:gauss_dist_estm} to update $\hat{x}^i(k)$.

Concurrently with the state estimation, sensor $i$ would be obtaining observations $z_{i,t}(k)$ of the signal field for $k = 0,1,\ldots$\footnote{The time-scales of the relative state measurements and the signal measurements might be different but for simplicity we keep them the same.}. In the centralized case (Sec. \ref{sec:model_free}), each sensor uses the following gradient approximation:
\begin{equation}
\label{eq:grad_decompose}
g(m_t,y) \approx W(x_t) z_t = \sum_{i=1}^n \mathbf{col}_i(W(x_t)) z_{i,t},
\end{equation}
where $\mathbf{col}_i(W(x_t))$ denotes the $i$th column of the FD-weight matrix. Since $x_t$ and $z_t$ are not available in the distributed setting, each sensor can use its local measurements $z_{i,t}(k)$ and its estimate $\hat{x}_t^i(k)$ of the sensor states to form its own local estimate of the signal gradient:
\begin{equation}
\label{eq:local_grad_estm}
\hat{g}_{i,t}(k) := \mathbf{col}_i(W(\hat{x}_t^i(k))) \frac{1}{k+1} \sum_{\tau=0}^k z_{i,t}(\tau).
\end{equation}
In order to obtain an approximation to $g(m_t,y)$ as in \eqref{eq:grad_decompose} in a distributed manner, we use a high-pass dynamic consensus filter \cite{Spanos_IFAC05} to have the sensors agree on the value of the sum:
\[
\hat{g}_t(k) := n \biggl(\frac{1}{n}\sum_{i=1}^n \hat{g}_{i,t}(k)\biggr).
\]
Each node maintains a state $q_{i,k}$, receives an input $\mu_{ik}$, and provides an output $r_{ik}$ with the following dynamics:
\begin{align}
q_{i,k+1} &= q_{i,k} + \beta \sum_{j \in \mathsf{N}_i} (q_{j,k} - q_{i,k}) + \beta \sum_{j \in \mathsf{N}_i} (\mu_{j,k} - \mu_{i,k}) \notag\\
r_{i,k} &= q_{i,k} + \mu_{i,k} \label{eq:consensus_filter}
\end{align}
where $\beta >0$ is a step-size. For a connected network \cite[Thm.1]{Spanos_IFAC05} guarantees that $r_{i,k}$ converges to $1/n \sum_i \mu_{i,k}$ as $k \to \infty$. The following result can be shown by letting $\mu_{i,k} := \hat{g}_{i,t}(k)$ and is proved in the appendix.
\begin{theorem}
\label{thm:unbiased_grad_estimate}
Suppose that the communication graph $G$ is strongly connected. If the sensor nodes estimate their states $x^*$ from the relative measurements (\ref{eq:rel_mes_vec}) using algorithm \eqref{eq:gauss_dist_estm}, compute the FD-weights (\ref{eq:rbffd_weights}) using the state estimates, and run the dynamic consensus filter (\ref{eq:consensus_filter}) with input $\mu_{i,k} := \hat{g}_{i,t}(k)$, which was defined in (\ref{eq:local_grad_estm}), then the output $r_{i,k}$ of the consensus filter satisfies:
\[
n \left(\lim_{k\to \infty} \mathbb{E}[r_{i,k}]\right) = g(m^*,y) + b, \quad \forall i \in \{1,\ldots,n\},
\]
where $g(m^*,y)$ is the true signal gradient at $m^* := \sum_{i=1}^n x_i^*/n$ and $b$ is the error in the finite-difference approximation (\ref{eq:grad_approx}).
\end{theorem}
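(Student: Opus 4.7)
I would chain together three facts established earlier in the paper---mean-square consistency of the distributed linear Gaussian estimator, the law of large numbers for the per-sensor signal measurements, and the tracking property of the dynamic consensus filter---and then identify the limit with the right-hand side of \eqref{eq:grad_approx}.

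First, I would apply Theorem \ref{thm:dist_estm_L2} to the relative-measurement system \eqref{eq:rel_mes_vec}. Since $G$ is strongly connected (hence connected) and the incidence matrix $B \otimes I_{d_x}$ has rank $(n-1)d_x$ on the subspace of configurations with $x_1 = 0$, each $\hat{x}^i(k)$ converges to $x^*$ in mean square as $k \to \infty$. Because the state space $\mathcal{X}$ is bounded and the RBF-FD map $x \mapsto W(x)$ in \eqref{eq:rbffd_weights}--\eqref{eq:rbffd_rhs} is smooth (hence uniformly continuous on the closure of the workspace), it follows that $\mathbb{E}[W(\hat{x}^i(k))] \to W(x^*)$ entrywise; in particular, $\mathbb{E}[\mathbf{col}_i(W(\hat{x}^i(k)))] \to \mathbf{col}_i(W(x^*))$ for every $i$.

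Second, I would handle the per-sensor running average of the signal measurements. The noise $v_{i,t}(\tau)$ in \eqref{eq:obs_model} is zero-mean, iid across $\tau$, and independent of the relative-measurement noise $\epsilon_{ij}$ driving the state estimator, so $\frac{1}{k+1}\sum_{\tau=0}^{k} z_{i,t}(\tau) \to h(x_{i,t}, y)$ in $L^2$. Combined with the previous step and the independence of the two noise processes, the product satisfies
\[
\lim_{k\to\infty}\mathbb{E}\bigl[\hat{g}_{i,t}(k)\bigr] = \mathbf{col}_i(W(x^*))\, h(x_{i,t}, y).
\]
A brief Cauchy--Schwarz bound on $\mathbb{E}\lVert(W(\hat{x}^i(k))-W(x^*))\cdot(\bar{z}_{i,t}(k)-h(x_{i,t},y))\rVert$ justifies splitting the limit through the product; boundedness of $\mathcal{X}$ keeps both factors uniformly $L^2$-bounded, which is the one technical point I expect to require care.

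Third, I would invoke \cite[Thm.~1]{Spanos_IFAC05} for the consensus filter \eqref{eq:consensus_filter}. Setting $\mu_{i,k} := \hat{g}_{i,t}(k)$, the inputs form a convergent sequence (in expectation) by the previous step, so the high-pass consensus filter tracks the instantaneous average: $\mathbb{E}[r_{i,k}] - \frac{1}{n}\sum_{j=1}^{n} \mathbb{E}[\mu_{j,k}] \to 0$ for every $i$. Taking limits and multiplying by $n$,
\[
n \lim_{k\to\infty} \mathbb{E}[r_{i,k}] = \sum_{j=1}^{n} \mathbf{col}_j(W(x^*))\, h(x_{j,t}, y) = W(x^*)\begin{pmatrix} h(x_{1,t},y)\\ \vdots \\ h(x_{n,t},y)\end{pmatrix},
\]
and the right-hand side equals $g(m^*, y) + b$ by the defining identity \eqref{eq:grad_approx} of the finite-difference approximation at the true centroid $m^*$. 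The main obstacle, as noted, is the interchange of limit and expectation through the product of the random FD weight matrix and the noisy sample mean; once uniform $L^2$-boundedness of both factors is established from the boundedness of $\mathcal{X}$, the rest is a direct combination of the three cited convergence results.
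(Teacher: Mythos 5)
Your overall architecture matches the paper's: (i) invoke Thm.~\ref{thm:dist_estm_L2} for the state estimates, (ii) pass to convergence of $\mathbb{E}[\hat{g}_{i,t}(k)]$ using the independence of the signal noise and the relative-measurement noise, (iii) take expectations through the consensus filter dynamics and apply the tracking result of \cite{Spanos_IFAC05}, and (iv) identify the limit with $\tfrac{1}{n}(g(m^*,y)+b)$ via \eqref{eq:grad_approx}. Steps (ii)--(iv) are fine (the Cauchy--Schwarz splitting you propose is actually unnecessary once you note, as you do, that $W(\hat{x}^i(k))$ and the running average of $z_{i,t}(\tau)$ are independent, so the expectation of the product factors exactly).

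The genuine gap is in your justification that $\mathbb{E}\bigl[W(\hat{x}^i(k))\bigr] \to W(x^*)$. You argue this from ``boundedness of $\mathcal{X}$'' and uniform continuity of $x\mapsto W(x)$ on the closure of the workspace, but the estimates $\hat{x}^i(k) = \Omega_{i,k}^{-1}\omega_{i,k}$ produced by \eqref{eq:gauss_dist_estm} are \emph{not} confined to $\mathcal{X}$: they are affine functions of unbounded Gaussian noise and can take any value in $\mathbb{R}^{(n-1)d_x}$. So neither $W(\hat{x}^i(k))$ nor your ``uniform $L^2$-boundedness of both factors'' follows from the boundedness of the physical workspace, and $L^2$-convergence of $\hat{x}^i(k)$ plus continuity of $W$ only yields convergence \emph{in probability} of $W(\hat{x}^i(k))$, which by itself does not let you pass to the limit inside the expectation. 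The paper closes exactly this hole with a uniform-integrability argument: $L^1$-convergence of $\hat{x}^i(k)$ gives UI of $\{\hat{x}^i(k)\}$, the explicit bound $\|W(\hat{x}^i(k))\|_1 \leq 4\delta^2 K^\Phi_i \|\hat{x}^i(k)\|_1$ derived from \eqref{eq:rbffd_weights}--\eqref{eq:rbffd_rhs} transfers UI to $\{W(\hat{x}^i(k))\}$, and UI together with convergence in probability (continuous mapping theorem) yields $W(\hat{x}^i(k)) \overset{L^1}{\to} W(x^*)$ and hence convergence of the expectations. You need this (or some equivalent domination argument) for your step (ii) to go through; as written, the claim $\mathbb{E}[\mathbf{col}_i(W(\hat{x}^i(k)))]\to\mathbf{col}_i(W(x^*))$ is unsupported.
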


After this procedure the agents agree on a centroid for the formation and a gradient estimate, which can be used to compute the next formation centroid according to (\ref{eq:approx_mass_dyanmics}). Since the FD weights are re-computed at every $t$, the formation need not be maintained accurately. This allows the sensors to avoid obstacles and takes care of the motion uncertainty.

\subsection{Distributed Model-based Algorithm}
\label{sec:dist_mb}
In this section, we aim to distribute the model-based source-seeking algorithm (\ref{eq:sra}). We make an assumption that sensors which are far from each other receive independent information. This is reasonable for sensors with limited sensing range because when they are far from each other, their sensed signals (if any) would not be coming from the same source. As a result, computing the  mutual information gradient in \eqref{eq:mi_expectation} with respect to $x_i$ is decoupled from the states of the distant sensors.
\begin{theorem}
\label{thm:MI_decouple}
Let $\mathcal{V}_i$ denote the set of sensors (excluding $i$) whose fields of view overlap with that of sensor $i$. Let $\bar{\mathcal{V}}_i$ denote the rest of the sensors. Suppose that sensor $i$'s measurements, $z_i$, are independent (not conditionally on $y$, as before) of the measurements, $z_{\bar{V}_i}$, obtained by the sensors $\bar{V}_i$, i.e. $p_t(z_i,z_{\bar{V}_i} \mid x_i, x_{\bar{V}_i}) = p_t(z_i \mid x_i) p_t(z_{\bar{V}_i} \mid x_{\bar{V}_i})$. Then:
\begin{center}$
\frac{\partial}{\partial x_i} I\bigl(y; z_i, z_{\mathcal{V}_i}, z_{\bar{\mathcal{V}}_i} \mid x_i, x_{\mathcal{V}_i}, x_{\bar{\mathcal{V}}_i}\bigr) = \frac{\partial}{\partial x_i} I\bigl(y; z_i, z_{\mathcal{V}_i} \mid x_i, x_{\mathcal{V}_i}\bigr).
$\end{center}
\end{theorem}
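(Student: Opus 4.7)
The plan is to isolate the $x_i$-dependence by a chain-rule decomposition of the mutual information and discard the piece that does not depend on $x_i$. I first reorganize the joint measurement density using the two structural facts that are available: the conditional independence of the sensor observations given $y$ (from the noise model in \eqref{eq:obs_model}), and the stated unconditional independence between $z_i$ and $z_{\bar{\mathcal{V}}_i}$. Together these let me treat $z_{\bar{\mathcal{V}}_i}$ as an ``add-on'' measurement block that is uncoupled from $(z_i, z_{\mathcal{V}_i})$ modulo $x_{\bar{\mathcal{V}}_i}$.

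The main step is to establish the additive decomposition
\[
I(y; z_i, z_{\mathcal{V}_i}, z_{\bar{\mathcal{V}}_i} \mid x_i, x_{\mathcal{V}_i}, x_{\bar{\mathcal{V}}_i}) = I(y; z_i, z_{\mathcal{V}_i} \mid x_i, x_{\mathcal{V}_i}) + I(y; z_{\bar{\mathcal{V}}_i} \mid x_{\bar{\mathcal{V}}_i}).
\]
I would derive this by writing each mutual information as a difference of differential entropies and splitting both terms. The measurement-noise-given-source factorization $p(z \mid y, x) = p(z_i, z_{\mathcal{V}_i} \mid y, x_i, x_{\mathcal{V}_i})\, p(z_{\bar{\mathcal{V}}_i} \mid y, x_{\bar{\mathcal{V}}_i})$ immediately splits $H(z \mid y, x)$. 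For $H(z \mid x)$, the (appropriately extended) independence hypothesis gives $p_t(z \mid x) = p_t(z_i, z_{\mathcal{V}_i} \mid x_i, x_{\mathcal{V}_i})\, p_t(z_{\bar{\mathcal{V}}_i} \mid x_{\bar{\mathcal{V}}_i})$, which splits $H(z \mid x)$ analogously. Subtracting yields the displayed identity. Differentiating both sides with respect to $x_i$ and observing that $I(y; z_{\bar{\mathcal{V}}_i} \mid x_{\bar{\mathcal{V}}_i})$ has no $x_i$-dependence then gives the claim.

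The main obstacle is justifying the factorization $p_t(z \mid x) = p_t(z_i, z_{\mathcal{V}_i} \mid x_i, x_{\mathcal{V}_i})\, p_t(z_{\bar{\mathcal{V}}_i} \mid x_{\bar{\mathcal{V}}_i})$ from the literal form of the hypothesis, which only asserts independence between $z_i$ and $z_{\bar{\mathcal{V}}_i}$. The intended interpretation, supported by the motivating discussion that sensors outside one another's field of view do not share information about the source, is that the same reasoning applies to the combined block $(z_i, z_{\mathcal{V}_i})$ versus $z_{\bar{\mathcal{V}}_i}$, so the required factorization is implicit.

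As a backup path that avoids extending the assumption, I would apply Theorem~\ref{thm:MIG} directly. Substituting $\nabla_{x_i} p_t(y, z \mid x) = p_t(y)\bigl(\nabla_{x_i} p(z_i \mid y, x_i)\bigr) p(z_{\mathcal{V}_i} \mid y, x_{\mathcal{V}_i}) p(z_{\bar{\mathcal{V}}_i} \mid y, x_{\bar{\mathcal{V}}_i})$ and splitting $\log[p(z\mid y,x)/p_t(z\mid x)]$ into four log-terms, the two terms involving $\log p(z_{\mathcal{V}_i}\mid y, x_{\mathcal{V}_i})$ and $\log p(z_{\bar{\mathcal{V}}_i}\mid y, x_{\bar{\mathcal{V}}_i})$ vanish after integration because $\int \nabla_{x_i} p(z_i \mid y, x_i)\, dz_i = \nabla_{x_i} 1 = 0$ once the other factors are marginalized. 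Comparing what remains to the analogous expansion of $\nabla_{x_i} I(y; z_i, z_{\mathcal{V}_i} \mid x_i, x_{\mathcal{V}_i})$, one finds the $\log p(z_i\mid y, x_i)$ terms coincide, and equality of the $\log p_t(z\mid x)$ and $\log p_t(z_i, z_{\mathcal{V}_i}\mid x_i, x_{\mathcal{V}_i})$ contributions reduces (via integration-by-parts against the normalization constraint) precisely to $\nabla_{x_i} H(z_{\bar{\mathcal{V}}_i} \mid z_i, z_{\mathcal{V}_i}, x) = 0$, which is exactly the content one needs.
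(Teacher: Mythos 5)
Your proposal is correct and follows essentially the same route as the paper: a chain-rule/additive decomposition of the mutual information into a term involving only $(z_i, z_{\mathcal{V}_i})$ and a remainder that has no $x_i$-dependence, followed by differentiation. The one issue you flag --- that the literal hypothesis only asserts independence of $z_i$ and $z_{\bar{\mathcal{V}}_i}$ while the argument needs the stronger block-level (and, for the conditional entropies, joint-with-$y$) factorization --- is not a defect of your write-up relative to the paper, since the paper's own proof makes the analogous leap when it drops $z_i$ and $x_i$ from the conditioning in $I\bigl(y; z_{\bar{\mathcal{V}}_i} \mid z_i, z_{\mathcal{V}_i}, x_i, x_{\mathcal{V}_i}, x_{\bar{\mathcal{V}}_i}\bigr)$, a step that likewise requires more than pairwise unconditional independence.
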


\begin{proof}
By the chain rule of mutual information and then the independence of $z_i$ and $z_{\bar{\mathcal{V}}_i}$:
\begin{align*}
I\bigl(y; z_i, z_{\mathcal{V}_i}&, z_{\bar{\mathcal{V}}_i}\mid x_i, x_{\mathcal{V}_i}, x_{\bar{\mathcal{V}}_i}\bigr)\\
& = I\bigl(y; z_i, z_{\mathcal{V}_i} \mid x_i, x_{\mathcal{V}_i}\bigr) + I\bigl(y; z_{\bar{\mathcal{V}}_i} \mid z_i, z_{\mathcal{V}_i}, x_i, x_{\mathcal{V}_i}, x_{\bar{\mathcal{V}}_i}\bigr)\\
& = I\bigl(y; z_i, z_{\mathcal{V}_i} \mid x_i, x_{\mathcal{V}_i}\bigr) + I\bigl(y; z_{\bar{\mathcal{V}}_i} \mid z_{\mathcal{V}_i}, x_{\mathcal{V}_i}, x_{\bar{\mathcal{V}}_i}\bigr).
\end{align*}
The second term above is constant with respect to $x_i$. \hfill $\blacksquare$
\end{proof}

As a result of Thm. \ref{thm:MI_decouple} and the stochastic approximation algorithm in (\ref{eq:sra}), sensor $i$ updates its pose as follows:
\begin{equation}
\label{eq:dist_MIgrad_ascend}
x_{i,t+1} = x_{i,t} + \gamma_t \pi_t\bigl(z_{\{i\} \cup \mathcal{V}_i,t}, x_{\{i\} \cup \mathcal{V}_i,t}\bigr).
\end{equation}
This update is still not completely distributed as it requires knowledge of $x_{\mathcal{V}_i,t}$ and the pdf $p_t$\footnote{Since all sensors have the same observation model $h(\cdot,\cdot)$, each sensor can simulate measurements $z_{\mathcal{V}_i,t}$ as long as it knows the configurations $x_{\mathcal{V}_i,t}$.}. We propose to distribute the computation of $p_t$ via the distributed particle filter (Alg. \ref{alg:dist_part_filt}). Then, each sensor maintains its own estimate of the source pdf, $p_{i,t}$, represented by a particle set $\{w_{i,t}^m,y_{i,t}^m\}$. Given a new measurement, $z_{i,t+1}$, sensor $i$ averages its prior, $p_{i,t}$, with the priors of its neighbors and updates it using Bayes rule. Finally, to obtain $x_{\mathcal{V}_i,t}$ we use a flooding algorithm (Alg. \ref{alg:conf_exhange}). The convergence analysis of the gradient ascent scheme in the distributed case \eqref{eq:dist_MIgrad_ascend} remains the same as in Sec. \ref{sec:model_based} because each sensor $i$ computes the complete MI gradient. This is possible because due to Thm. \ref{thm:MI_decouple} the states and measurements of distant sensors are not needed, while Alg. \ref{alg:conf_exhange} provides the information from the nearby sensors.

\begin{algorithm}[htb]
\caption{States Exchange Algorithm at Sensor $i$}
\begin{algorithmic}[1]
\footnotesize
\State \textbf{Input}: Communication radius $r_c$, sensing radius $r_s$, state $x_i$
\State \textbf{Output}: Array $a_i$ with $a_i[j] = x_j$ if $j \in \mathcal{V}_i\cup\{i\}$ and $a_i[j] = \textit{empty}$ else
\State $a_i[i] \gets x_i, \quad a_i[j] \gets \textit{empty}, \;\; j \neq i$ \Comment Holds the required sensor states
\State $b \gets \min\{ \mathbf{ceil}(2 r_s/r_c), n\}$ \Comment Number of rounds needed
\For{$k = 1 \ldots b$}
  \State Send $a_i$ to neighbors $\mathsf{N}_i$, receive $\{a_j\}$ from $j \in \mathsf{N}_i$ 
  \For{$j \in \mathsf{N}_i$}
    \For{$l = 1 \ldots n$}
      \If{$(a_i[l] = \textit{empty}) \&\& (a_j[l] \neq \textit{empty})$}
        \State $a_i[l] \gets a_j[l]$
      \EndIf
    \EndFor
  \EndFor
\EndFor
\end{algorithmic}
\label{alg:conf_exhange}
\end{algorithm}

\section{Applications}
\label{sec:applications}
\begin{figure*}[t!]
	\begin{center}
		\includegraphics[width=\linewidth]{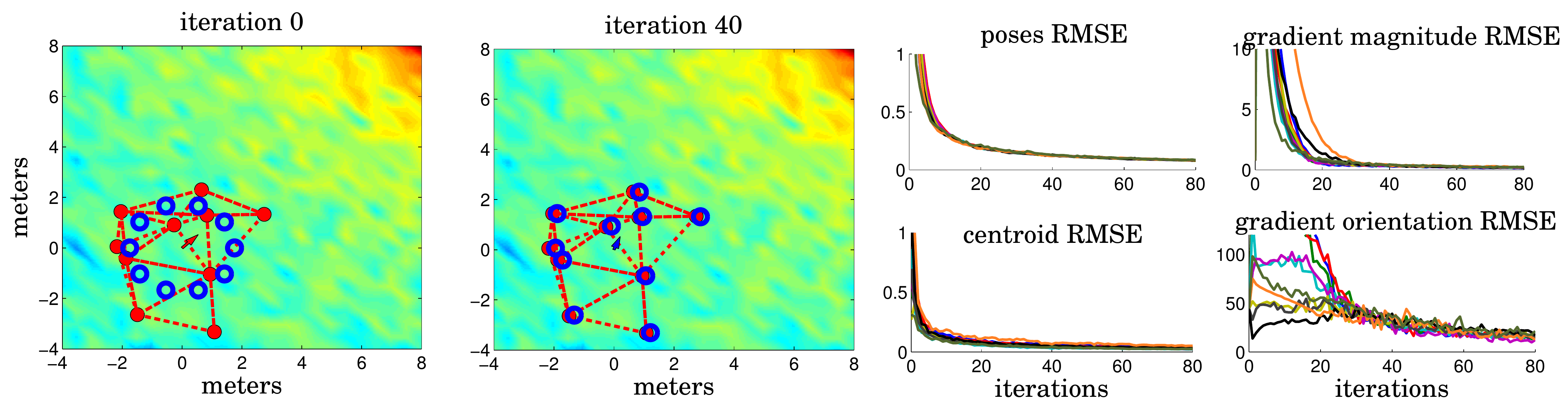}
	\end{center}
	\caption{Joint position and gradient estimation at a single measurement location (on the fast time-scale). The first plot shows the true sensor positions (red circles), initial position estimates (blue circles), and the true gradient of the signal field (red arrow). The second plot shows the position estimates after 40 iterations (blue circles) and the gradient estimate of sensor 1 (blue arrow). The third column shows the root mean squared error (RMSE) of the position (top) and centroid (bottom) estimates of all sensors averaged over 50 independent repetitions. The forth column shows the RMSE of the gradient magnitude and orientation estimates.}
	\label{fig:pose_grad_estm}
\end{figure*}

The performance of the source-seeking algorithms is demonstrated in simulation using a team of $10$ sensors to localize the source of a wireless radio signal. We consider a radio signal because it is very noisy and difficult to model and yet most approaches for wireless source seeking are model-based, which makes it suitable for comparing the two algorithms. We begin by modeling the received signal strength (RSS), which is needed for the model-based algorithm.

\subsection{RSS Model}
\label{sec:rss_model}
Let the positions of a wireless source and receiver in 2D be $y$ and $x$, respectively. The \textit{received signal strength} (dBm) at $x$ is modeled as:
\begin{center}$\begin{aligned}
P_{rx}(x,y) = P_{tx} &+ G_{tx} - L_{tx}  + G_{rx} - L_{rx} \\
&- L_{fs}(x,y) - L_{m}(x,y) - R(x,y),
\end{aligned}$\end{center}
where $P_{tx}$ is the transmitter output power (18 dBm in our experiments), $G_{tx}$ is the transmitter antenna gain (1.5 dBi), $L_{tx}$ is the transmitter loss (0 dB), $G_{rx}$ is the receiver antenna gain (1.5 dBi), $L_{rx}$ is the receiver loss (0 dB), $L_{fs}$ is the free space loss (dB), $L_m$ is the multi-path loss (dB), and $R$ is the noise. The free space loss is modeled as:
\begin{center}$
L_{fs}(x,y) = -27.55 + 20 \log_{10}(\nu) + 20 \log_{10}\bigl(\|x-y\|_2\bigr),
$\end{center}
where $\nu$ is the frequency (2400 MHz). The model from \cite{Capulli06_wirelessNetworks} is used for the the multi-path loss:
\begin{center}$
L_{m}(x,y) = \begin{cases}
             \alpha + \beta \lambda(x,y),&\text{if } \lambda(x,y) > 0\\
             0,& \text{else}
             \end{cases}
$\end{center}
where $\alpha$ is a multi-wall constant (30 dB), $\beta$ is a wall attenuation factor (15 dB/m), and $\lambda(x,y)$ denotes the distance traveled by the ray from $y$ to $x$ through occupied cells in the environment (represented as an occupancy grid). Finally, if the measurement is line-of-sight (LOS), i.e. $\lambda(x,y) = 0$, the fading $R(x,y)$ is Rician$(\mu,\sigma)$; otherwise it is Rayleigh$(\sigma)$. We used $\mu = 4$ dB and $\sigma = 20$ dB in the simulations.

\subsection{Simulation Results}
The first experiment aims at verifying the conclusions of Thm. \ref{thm:unbiased_grad_estimate} when the sensor formation is not maintained well, namely that the distributed relative pose estimation and the consensus on the local finite difference gradient estimates converge asymptotically to an unbiased (up to the error in the FD approximation) gradient estimate. Ten sensors were arranged in a distorted ``circular'' formation (see Fig. \ref{fig:pose_grad_estm}) and were held stationary during the estimation procedure (on the fast time-scale). Initially, the sensors assumed that they were in a perfect circular formation of radius 1.75 meters. Relative measurements (\ref{eq:rel_mes}) with noise covariance $E_{ij} = 0.4 I_2$ were exchanged to estimate the sensor states. At each time $k$, sensor $i$ used its estimate $\hat{x}^i(k)$ to compute the FD weights via (\ref{eq:rbffd_weights}). Wireless signal measurements obtained according to the RSS model were combined with the FD weights to form the local gradient estimates (\ref{eq:local_grad_estm}), which were used to update the state of the consensus filter according to (\ref{eq:consensus_filter}). Fig. \ref{fig:pose_grad_estm} shows that the errors in the pose and the gradient estimates tend to zero after $80$ iterations on the fast time scale.

\begin{figure}[t!]
  \begin{center}
      \includegraphics[width=\linewidth]{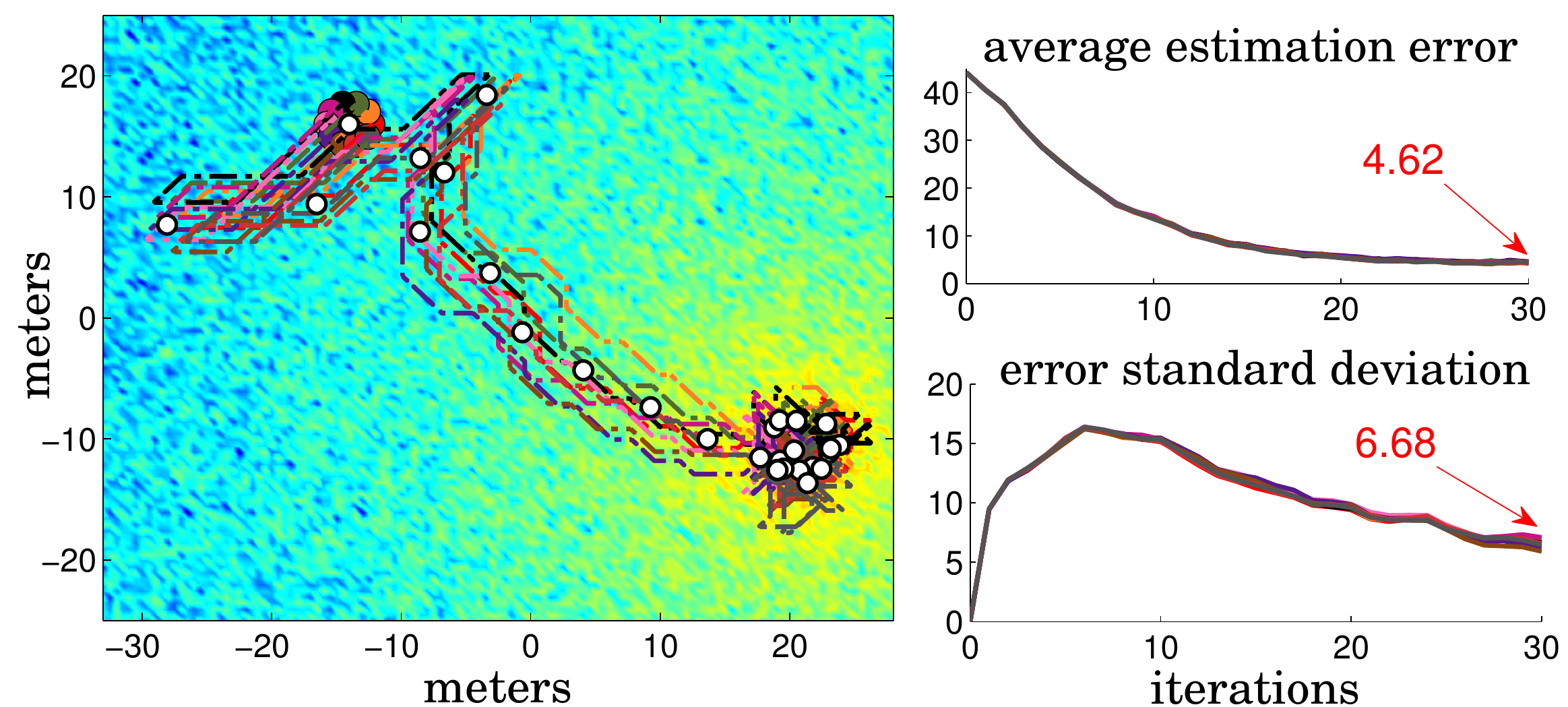}
  \end{center}
	\caption{The paths followed by the sensors after 30 iterations of the model-free source-seeking algorithm in an obstacle-free environment. The white circles indicate sensor 1's estimates of the source position over time. The plots on the right show the average error of the source position estimates and its standard deviation averaged over 50 independent repetitions.}
	\label{fig:mf}
\end{figure}
\begin{figure*}[t!]
	\begin{center}
	  \includegraphics[width=0.49\linewidth]{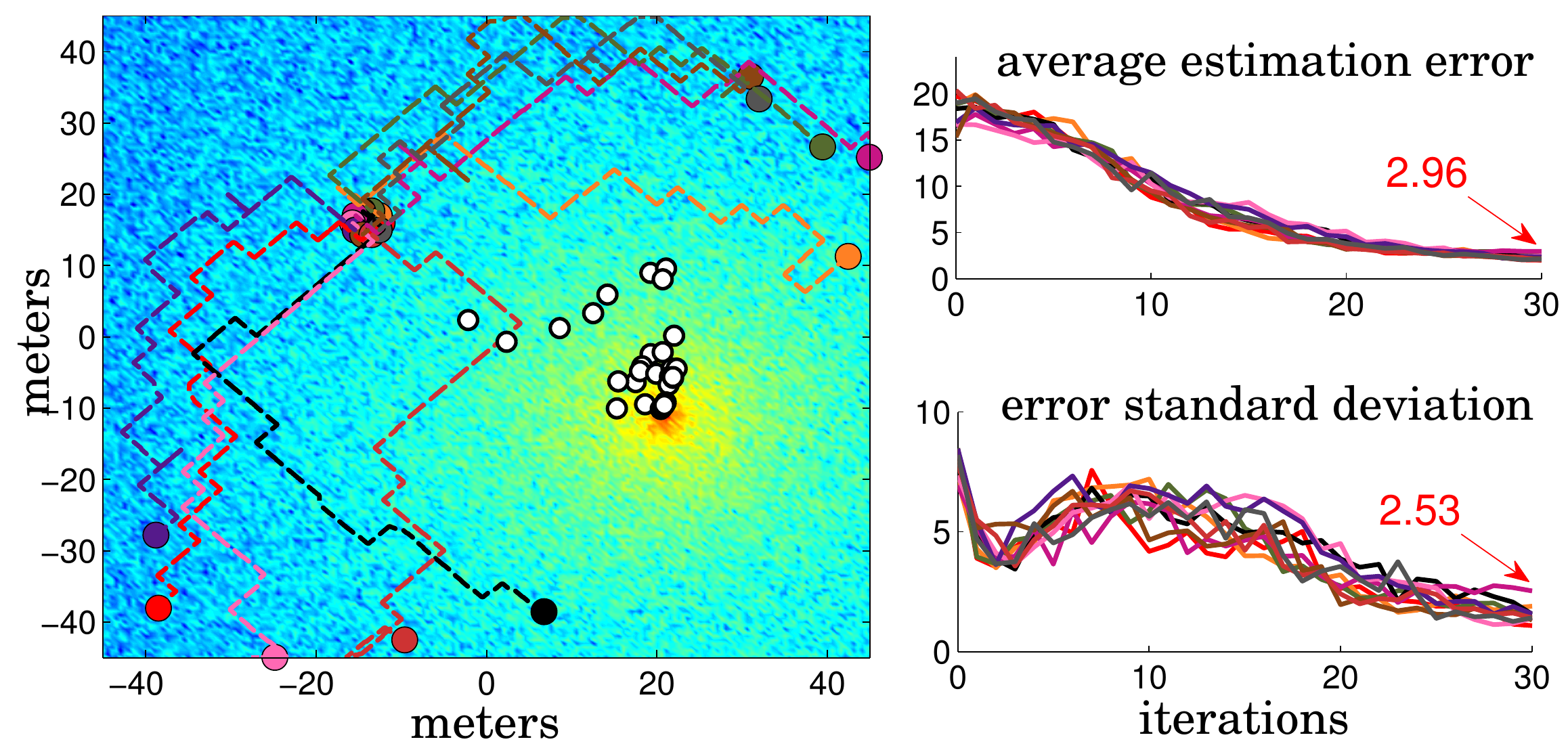}
		\includegraphics[width=0.49\linewidth]{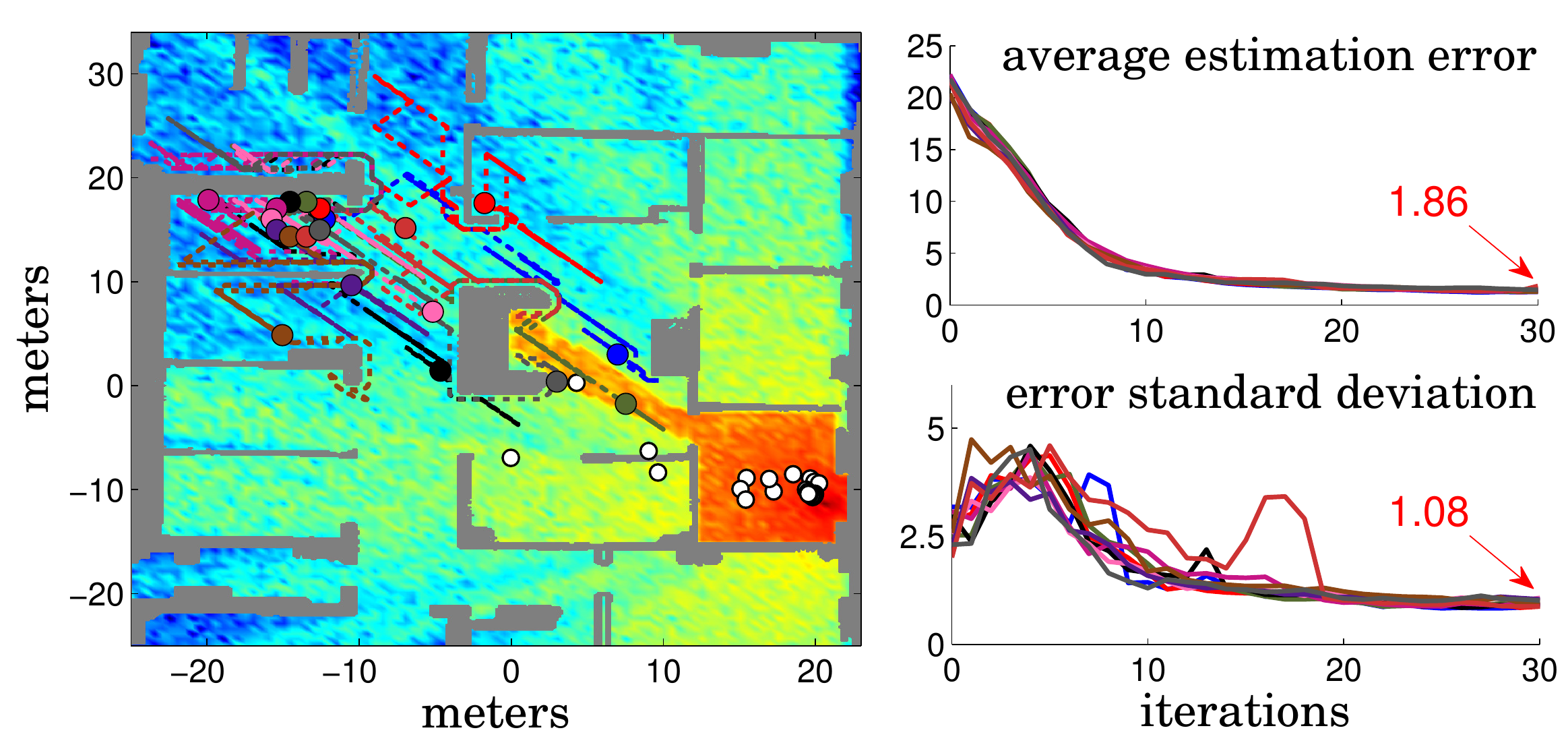}\\
		\includegraphics[width=0.495\linewidth]{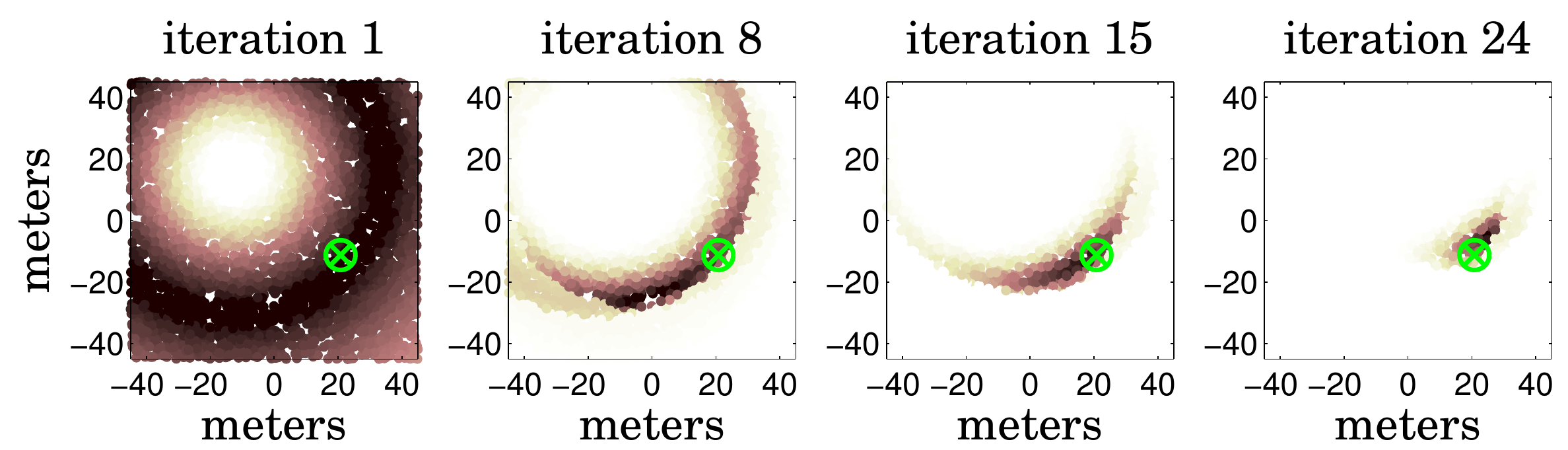}
		\includegraphics[width=0.495\linewidth]{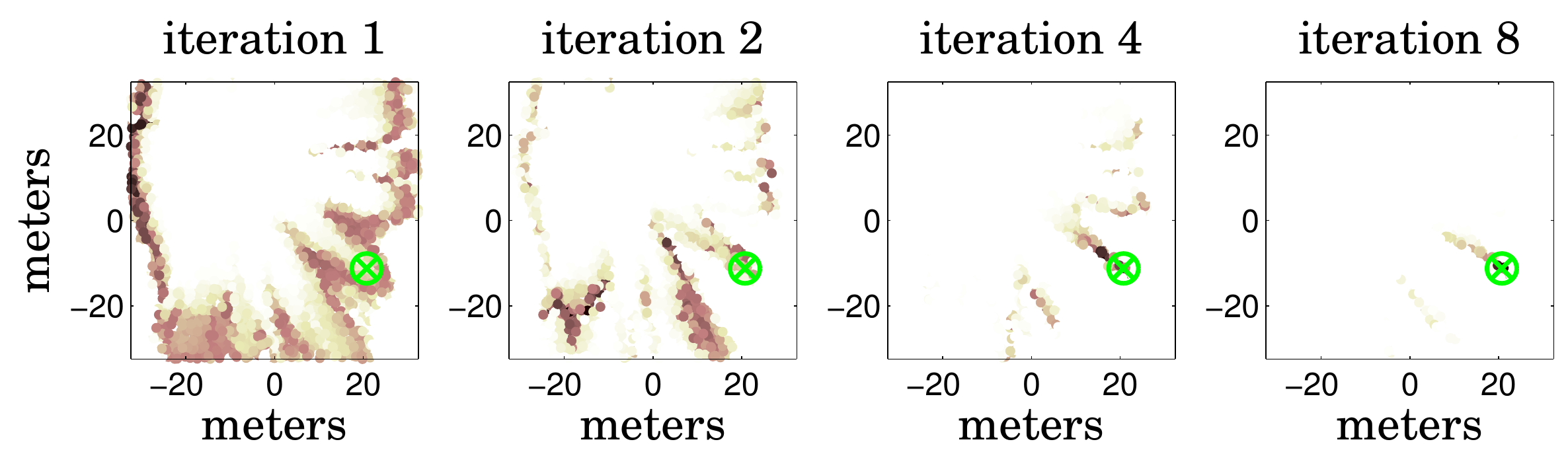}
	\end{center}
	\caption{The paths followed by the sensors after 30 iterations of the model-based source-seeking algorithm in an environment without obstacles (left) and with obstacles (right). The white circles indicate sensor 1's estimates of the source position over time. The plots show the average error of the source position estimates and its standard deviation averaged over 50 independent repetitions. The evolution of sensor 1's distributed particle filter is shown in each scenario (bottom row).}
	\label{fig:mb}
\end{figure*}

Next, we demonstrate the ability of our algorithms to localize the source of a wireless signal obtained using the RSS model of Sec. \ref{sec:rss_model}. The performance of the model-free algorithm is illustrated in Fig. \ref{fig:mf}. A circular formation with radius 1.75 meters consisting of 10 sensors was maintained. The communication radius was 6 meters, while the sensing radius was infinite. The sensors did not coordinate to maintain the formation. They were kept together by the agreement on the centroid and the signal gradient, achieved via the distributed state estimation and the consensus filter. At time $t$, each sensor $i$ applied the control $u_{i,t} = \gamma_t \hat{g}_{i,t}(K_{max})$, where $\hat{g}_{i,t}(K_{max})$ is the gradient estimate after $K_{max} = 50$ iterations on the fast time-scale and $\gamma_t$ is the step-size. Unlike the continuous measurements illustrated in Fig. \ref{fig:pose_grad_estm}, the sensors measured their relative states and the wireless signal only 10 times and stopped updating their local gradient estimates so that the consensus filter could converge fast. The initial distance between the signal source and the centroid of the sensor formation was 44.2 meters. Averaged over 50 independent repetitions the sensors managed to estimate the source location within 4.62 meters in 30 iterations. 

The same initial source and sensor positions were used to set up the model-based experiments. Fig. \ref{fig:mb} illustrates the performance in environments with and without obstacles. The communication radius was 10 meters, while the sensing radius was infinite. The sensors maintained distributed particle filters with 4000 particles and used 5 signal measurements to update the filters before moving (unlike the 10 used in the model-free case). The stochastic mutual information gradient was obtained via 10 simulated signal measurements only. Averaged over 50 independent repetitions, the sensors managed to estimate the source location within 2.96 meters in the obstacle-free case and 1.86 meters in the obstacle case after 30 iterations of the algorithm. It is interesting to note that the performance of the model-based algorithm is better when obstacles are present than in the obstacle-free environment. When the model is good and the environment is known, the wall attenuation of the signal helps the sensors discount many hypothetical source positions, which would not be possible in the obstacle-free case (see the distributed particle filter evolution in Fig. \ref{fig:mb}). We note that when a good signal model is available, the model-based algorithm outperforms the model-free one. However, we expect that as the quality of the model degrades so would the performance of the model-based approach and the model-free algorithm would become more attractive.

\section{Conclusion}
This paper presented distributed model-free and model-based approaches for source seeking with a mobile sensor network. Our stochastic gradient ascent approach to model-free source seeking does not necessitate global localization and is robust to deformations in the geometry of the sensor team. The stochastic approximation simplifies the algorithm and provides convergence guarantees. The model-based method has the sensors follow a stochastic gradient of the mutual information between their expected measurements and their source estimates. In this case, the stochastic approximation enables a key trade-off between time spent moving the sensors and time spent planning the most informative move. The experiments show that the model-based algorithm outperforms the model-free one when an accurate model of the signal is available. Its drawbacks are that it relies on knowledge of the environment, global localization, and a flooding algorithm to exchange the sensor states, which can be demanding for the network. If computation is limited, the environment is unknown, the signal is difficult to model, or global localization is not available, the model-free algorithm would be the natural choice. Future work will focus on comparing the performance of the algorithms with other source seeking algorithms in the literature. It is of interest to apply the algorithms to other signals and to carry out real-world experiments as well.


\appendix       
\section*{Appendix A: Proof of Thm. \ref{thm:unbiased_grad_estimate}}
From Thm. \ref{thm:dist_estm_L2}, $\hat{x}^{i}(k) \overset{L^2}{\rightarrow} x^*, \; \forall i$, which implies convergence in $L^1$ and in probability. Convergence in $L^1$ implies that the sequence $\{\hat{x}^i(k)\}$ is uniformly integrable (UI) for all $i$ \cite[Thm. 5.5.2]{durrett_probability10}. We claim that this implies that the sequence of FD weights $W(\hat{x}^i(k))$ computed in (\ref{eq:rbffd_weights}) is UI for each $i$. The matrix $\Phi$ in (\ref{eq:rbffd_weights}) is a bounded continuous function of $\hat{x}^i(k)$, which means that there exists a constant $K^\Phi_i \leq \infty$ for each $i$ such that $\|\Phi(\hat{x}^i(k))^{-T}\|_1 \leq K^\Phi_i$. Define $\alpha_i(k) :=  \hat{x}_i^i(k) - \sum_{j=1}^n \hat{x}^i_j(k) / n$. From (\ref{eq:rbffd_rhs}):
\begin{align*}
\|W&(\hat{x}^i(k))\|_1 \leq \left\|\begin{bmatrix}
  2\delta^2 e^{-\delta^2\|\alpha_1(k)\|_2^2}\alpha_1^T(k)\\
  \vdots\\
  2\delta^2 e^{-\delta^2\|\alpha_n(k)\|_2^2}\alpha_n^T(k)
\end{bmatrix}^T\right\|_1 \left\|\Phi(\hat{x}^i(k))^{-T}\right\|_1\\
&\leq 2\delta^2 K^\Phi_i\sum_{j=1}^n  e^{-\delta^2\|\alpha_j(k)\|_2^2} \|\alpha_j(k)\|_1\\
&\leq 2\delta^2 K^\Phi_i\sum_{j=1}^n \biggl\|\hat{x}^i_j(k) - \frac{1}{n} \sum_{l=1}^n \hat{x}^i_l(k)\biggr\|_1\\
&\leq 4 \delta^2 K^\Phi_i\sum_{j=1}^n \|\hat{x}^i_j(k)\|_1 = 4 \delta^2 K^\Phi_i\|\hat{x}^i(k)\|_1.
\end{align*}
By UI of $\{\hat{x}^i(k)\}$, for any $\epsilon > 0$, there exist $K_i \in [0,\infty)$ such that $\mathbb{E}\bigl[\|\hat{x}^i(k)\|_1 \indicator_{\{\|\hat{x}^i(k)\|_1 \geq K_i\}}\bigr]\leq\epsilon$ for all $k$. Then for all $i,k$:
\begin{align*}
\mathbb{E} &\bigl[\|W(\hat{x}^i(k))\|_1 \indicator_{\{\|W(\hat{x}^i(k))\|_1 \geq 4\delta^2 K^\Phi_i K_i\}} \bigr] \\
&\leq 4\delta^2 K^\Phi_i \mathbb{E} \biggl[ \|\hat{x}^i(k)\|_1 \indicator_{\{4\delta^2K_i^\Phi\|\hat{x}^i(k)\|_1 \geq 4\delta^2 K^\Phi_i K_i\}} \biggr]\leq 4\delta^2 K^\Phi_i \epsilon.
\end{align*}
Since $W(\hat{x}^i(k))$ is a continous function of $\hat{x}_i(k)$ by the continuous mapping theorem, $W(\hat{x}^i(k)) \overset{p}{\rightarrow} W(x^*), \; \forall i$. This, coupled with the uniform integrability of $\{W(\hat{x}^i(k))\}$ for all $i$ implies that $W(\hat{x}^i(k)) \overset{L^1}{\rightarrow} W(x^*), \; \forall i$. The signal measurements $z_i(\tau)$ in \eqref{eq:local_grad_estm} are independent of the estimates $W(\hat{x}^i(k))$ because the latter are based on the relative measurements in (\ref{eq:rel_mes}). Therefore,
\begin{align}
\mathbb{E} \hat{g}_{i}(k) &= \textstyle{\mathbb{E} \bigl[\mathbf{col}_i(W(\hat{x}^i(k))) \bigr] \frac{1}{k+1} \sum_{\tau=0}^k \mathbb{E} z_i(\tau)}\label{eq:grad_exp_convergence}\\
&=\mathbb{E} \bigl[\mathbf{col}_i(W(\hat{x}^i(k))) \bigr] h(x_i^*,y) \to \mathbf{col}_i(W(x^*)) h(x_i^*,y). \notag
\end{align}
Now, consider the behavior of the consensus filter in (\ref{eq:consensus_filter}) with $\mu_{i,k} = \hat{g}_i(k)$. Eliminating the state $q_{i,k}$ and writing the equations in matrix form gives:
\begin{center}$r_{k+1} = \left(I_{nd_x} - \beta (L \otimes I_{d_x})\right) r_k + (\mu_{k+1} - \mu_k),$\end{center}
where $L$ is the Laplacian of the communication graph $G$. Taking expectations above results is a deterministic linear time-invariant system, which was analyzed in \cite{Spanos_IFAC05}. In light of (\ref{eq:grad_exp_convergence}), Proposition 1 in \cite{Spanos_IFAC05} shows that for all $i$:
\[
\lim_{k\to\infty} \left(\mathbb{E}[r_{i,k}] - \frac{1}{n} \sum_{i=1}^n \mathbf{col}_i(W(x^*)) h(x_i^*,y) \right) = 0.
\]
Finally, the FD approximation in (\ref{eq:grad_approx}) shows that:
\[
\lim_{k\to\infty} \mathbb{E} r_{ik} = \frac{1}{n} W(x^*) \begin{pmatrix}
h(x_1^*,y)\\
\vdots\\
h(x_n^*,y)
\end{pmatrix} = \frac{1}{n} \biggl(g(m^*,y) + b\biggr), \; \forall i. \quad \blacksquare
\]

\section*{Appendix B: Proof of Thm. \ref{thm:dist_estm_L2}}
\noindent Define the following:
\begin{align*}
\omega_k &:= \begin{bmatrix} \omega_{1k}^T & \hdots & \omega_{nk}^T \end{bmatrix}^T \quad &\Omega_k&:= \begin{bmatrix} \Omega_{1k}^T & \hdots & \Omega_{nk}^T \end{bmatrix}^T\\
M_i &:= H_i^TE_i^{-1}H_i\quad &M&:=  \begin{bmatrix} M_1^T & \hdots & M_n^T \end{bmatrix}^T\\
\xi(k) &:=\rlap{$\begin{bmatrix} H_1E_1^{-T}\epsilon_1(k)^T & \hdots & H_nE_n^{-T}\epsilon_n(k)^T \end{bmatrix}^T$.}
\end{align*}
Then, (\ref{eq:gauss_dist_estm}) can be written in matrix form as follows:
\begin{equation}
\label{eq:wO_system}
\begin{aligned}
\omega_{k+1} &= \bigl(A \otimes I_{d_\theta}\bigr)\omega_k + M\theta^* + \xi(k),\\
\Omega_{k+1} &= \bigl(A \otimes I_{d_\theta}\bigr)\Omega_k + M,
\end{aligned}
\end{equation}
where $A =[a_{ij}]$, with $a_{ij} = 0$ if $j \notin \mathcal{N}_i \cup \{i\}$, is a stochastic matrix. The solutions of the linear systems are:
\begin{align*}
\omega_k &= \bigl(A \otimes I_{d_\theta}\bigr)^k \omega_0 + \sum_{\tau=0}^{k-1}\bigl(A \otimes I_{d_\theta}\bigr)^{k-1-\tau}\biggl(M \theta^* + \xi(\tau)\biggr),\\
\Omega_k &= \bigl(A \otimes I_{d_\theta}\bigr)^k \Omega_0 + \sum_{\tau=0}^{k-1}\bigl(A \otimes I_{d_\theta}\bigr)^{k-1-\tau} M.
\end{align*}
Looking at the $i$-th components again, we have:
\begin{align*}
\frac{\omega_{ik}}{k+1} &=\frac{1}{k+1} \sum_{j=1}^n \bigl[A^{k}\bigr]_{ij}\omega_{j0} +\\
&\quad\; \frac{1}{k+1} \sum_{\tau=0}^{k-1} \sum_{j=1}^n \bigl[A^{k-\tau-1}\bigr]_{ij}(M_j\theta^* + H_j^T E_j^{-1}\epsilon_{j}(\tau)),\\
\frac{\Omega_{ik}}{k+1} &= \frac{1}{k+1} \sum_{j=1}^n \bigl[A^k\bigr]_{ij}\Omega_{j0} +\frac{1}{k+1} \sum_{\tau=0}^{k-1} \sum_{j=1}^n \bigl[A^{k-\tau-1}\bigr]_{ij}M_{j}.
\end{align*}
Define the following to simplify the notation:
\begin{align}
b_{ik}&:= \textstyle{\frac{1}{k+1} \sum_{j=1}^n \bigl[A^k\bigr]_{ij}\omega_{j0}}, \quad &B_{ik}&:= \textstyle{\frac{1}{k+1} \sum_{j=1}^n \bigl[A^k\bigr]_{ij}\Omega_{j0}},\notag\\
c_{ik} &:= b_{ik} - B_{ik}\theta^*, \quad &C_{ik}& := \textstyle{\frac{1}{k+1}\Omega_{ik}},\notag\\
d_{it} &:=\rlap{$\textstyle{\frac{1}{k+1} \sum_{\tau=0}^{k-1} \sum_{j=1}^n \bigl[A^{k-\tau-1}\bigr]_{ij} H_j^T E_j^{-1}\epsilon_{j}(\tau)}$},\label{eq:shorthand-notation}\\
D_{ik} &:= \rlap{$\textstyle{\frac{1}{k+1} \sum_{\tau=0}^{k-1} \sum_{j=1}^n \bigl[A^{k-\tau-1}\bigr]_{ij}M_{j}}$}.\notag
\end{align}
With the shorthand notation:
\begin{equation}
\label{eq:wO_evolution}
  \frac{\omega_{ik}}{k+1} = b_{ik}+d_{ik}+D_{ik}\theta^*, \qquad C_{ik}=\frac{\Omega_{ik}}{k+1} =  B_{ik}+D_{ik},
\end{equation}
where $d_{ik}$ is the only random quantity. Its mean is zero because the measurement noise has zero mean, while its covariance is:
\begin{align}
  &\mathbb{E} [d_{ik} d_{ik}^T] =\frac{1}{(k+1)^2}\mathbb{E}\biggl[\biggl(\sum_{\tau=0}^{k-1} \sum_{j=1}^n \bigl[A^{k-\tau-1}\bigr]_{ij}H_{j}^T E_j^{-1} \epsilon_{j}(\tau)\biggr)\notag\\
  &\qquad \times \biggl(\sum_{s=0}^{k-1}\sum_{\eta=1}^n \bigl[A^{k-s-1}\bigr]_{i\eta}H_{\eta}^TE_\eta^{-1} \epsilon_{\eta}(s)\biggr)^T\biggr]\notag\\
  &\!=\!\frac{1}{(k+1)^2}\!\sum_{j=1}^n \sum_{\tau=0}^{k-1} \bigl[A^{k-\tau-1}\bigr]_{ij}^2H_j^T E_j^{-1} \mathbb{E}[\epsilon_{j}(\tau)\epsilon_{j}(\tau)^T]E_j^{-1}H_{j}\notag\\
  &= \frac{1}{(k+1)^2}\sum_{j=1}^n\sum_{\tau=0}^{k-1} \bigl[A^{k-\tau-1}\bigr]_{ij}^2M_j \preceq \frac{1}{k+1} D_{ik},\label{eq:covariance-dik}
\end{align}
where the second equality uses the fact that $\epsilon_j(\tau)$ and $\epsilon_\eta(s)$ are independent unless the indices coincide, i.e. $\mathbb{E}[\epsilon_j(\tau) \epsilon_\eta(s)^T] = \delta_{\tau s} \delta_{j \eta} E_j$. The L{\"o}wner ordering inequality in the last step uses that $0\leq\bigl[A^{k-\tau-1}\bigr]_{ij}\leq 1$ and $M_j \succeq 0$.

Since the communication graph $G$ is connected, $A$ corresponds to the transition matrix of an aperiodic irreducible Markov chain with a unique stationary distribution $\pi$ so that $A^k \to \pi \mathbf{1}^T$ with $\pi_j >0$. This implies that, as $k \to \infty$, the numerators of $b_{ik}$ and $B_{ik}$ remain bounded and therefore $b_{ik} \to 0$ and $B_{ik} \to 0$. Since Ces\'aro means preserve convergent sequences and their limits:
\[
\frac{1}{k+1} \sum_{\tau=0}^{k-1} \bigl[A^{k-\tau-1}\bigr]_{ij} \to \pi_j, \quad \forall i,
\]
which implies that $D_{ik} \to \sum_{j=1}^n \pi_j M_j$. The full-rank assumption on $\begin{bmatrix} H_1^T & \hdots & H_n^T \end{bmatrix}^T$ and $\pi_j>0$ guarantee that $\sum_{j=1}^n \pi_j M_j$ is positive definite. Finally, consider the mean squared error:
\begin{align*}
&\mathbb{E}\bigl[(\hat{\theta}_i(k) - \theta^*)^T(\hat{\theta}_i(k) - \theta^*) \bigr]\\
&= \mathbb{E} \biggl\| \biggl(\frac{\Omega_{ik}}{k+1}\biggr)^{-1}\frac{\omega_{ik}}{k+1} -  \biggl(\frac{\Omega_{ik}}{k+1}\biggr)^{-1}\biggl(\frac{\Omega_{ik}}{k+1}\biggr)\theta^* \biggr\|_2^2\\
&= \mathbb{E} \bigl\| C_{ik}^{-1} \bigl(b_{ik} +  d_{ik} + D_{ik}\theta^* - (B_{ik} + D_{ik})\theta^* \bigr)  \bigr\|_2^2\\
&= \mathbb{E} \|C_{ik}^{-1}(c_{ik} +d_{ik})\|_2^2\\
&= \mathbb{E}\biggl[c_{ik}^T C_{ik}^{-T} C_{ik}^{-1} c_{ik} + 2c_{ik}^TC_{ik}^{-T} C_{ik}^{-1}d_{ik} + d_{ik}^TC_{ik}^{-T} C_{ik}^{-1}d_{ik}\biggr]\\
&\longeq{(a)}{} c_{ik}^T C_{ik}^{-T} C_{ik}^{-1} c_{ik} + \tr( C_{ik}^{-1} \mathbb{E} [d_{ik} d_{ik}^T] C_{ik}^{-T})\\
&\longineq{(b)}{}{\leq} c_{ik}^T C_{ik}^{-T}C_{ik}^{-1}c_{ik} +\frac{1}{k+1}\tr(C_{ik}^{-1}D_{ik} C_{ik}^{-T}) \to 0,
\end{align*}
where $(a)$ holds because the first term is deterministic, while the cross term contains $\mathbb{E}[d_{ik}] = 0$. Inequality $(b)$ follows from \eqref{eq:covariance-dik}. In the final step, as shown before $C_{ik}^{-1} \to \bigl(\sum_{j=1}^n \pi_j M_j \bigr)^{-1}$ and $D_{ik} \to \sum_{j=1}^n \pi_j M_j \succ 0$ remain bounded, while $c_{ik} \to 0$ and $1/(k+1) \to 0$.\hfill$\blacksquare$

\bibliographystyle{asmems4}

\bibliography{bib/sss}

\end{document}